\newcommand{\florian}[1]{\todo[color=green!40,size=\scriptsize]{\textbf{F:} #1}}
\newcommand{\iflorian}[1]{\todo[inline,color=green!40]{\textbf{F:} #1}}
\DeclareMathOperator{\troot}{root}
\DeclareMathOperator{\cost}{c}
\DeclareMathOperator{\pathstart}{\alpha}
\DeclareMathOperator{\pathend}{\omega}
\DeclareMathOperator{\desc}{desc}
\newcommand{\edist}{\delta_{\text{E}}}
\newcommand{\prunedvar}{\overline{d}}
\newcommand{\delvar}{d}
\newcommand{\ddelvar}{\hat{d}}
\newcommand{\parentvar}{p}
\newcommand{\pmvar}{pm}
\newcommand{\mapvar}{m}
\newcommand{\gurobi}{\textsc{Gurobi}}
\newcommand{\cplex}{\textsc{CPLEX}}
\newcommand{\scip}{\textsc{SCIP}}
\newcommand{\pulp}{\textsc{PuLP}}
\newtheorem{theorem}{Theorem}
\newtheorem*{definition*}{Definition}
\title{Taming Horizontal Instability in Merge Trees: On the Computation of a Comprehensive Deformation-based Edit Distance}
\author{Florian Wetzels\thanks{e-mail: wetzels@rptu.de}\\ %
        \scriptsize University of Kaiserslautern-Landau %
\and Markus Anders\\ %
     \scriptsize TU Darmstadt %
\and Christoph Garth\thanks{e-mail: garth@rptu.de}\\ %
     \scriptsize University of Kaiserslautern-Landau}
\abstract{


Comparative analysis of scalar fields in scientific visualization often involves distance functions on topological abstractions.
This paper focuses on the merge tree abstraction (representing the nesting of sub- or superlevel sets) and proposes the application of the unconstrained deformation-based edit distance.
Previous approaches on merge trees often suffer from instability: small perturbations in the data can lead to large distances of the abstractions.
While some existing methods can handle so-called vertical instability, the unconstrained deformation-based edit distance addresses both vertical and horizontal instabilities, also called saddle swaps.
We establish the computational complexity as \NP-complete, and provide an integer linear program formulation for computation.
Experimental results on the TOSCA shape matching ensemble provide evidence for the stability of the proposed distance.
We thereby showcase the potential of handling saddle swaps for comparison of scalar fields through merge trees.
} 
\keywords{Scalar data, Topological data analysis, Merge trees, Edit distance}
\begin{document}



\firstsection{Introduction}
\maketitle

Comparative analysis of scalar fields is a core subject in the field of scientific visualization.
Over the last years, comparisons performed on topological abstractions have received increased interest, as it has two major advantages over direct comparisons between scalar fields:
First, abstract comparisons capture purely topological similarity, which for example reduces the impact of geometric symmetries.
Second, the abstractions are typically orders of magnitude smaller than the actual domain.
The latter aspect is of utmost interest in times of quickly increasing complexity of ensemble datasets.
One abstraction that has received particularly high interest is the merge tree, which represents the nesting of super- or sublevel sets in a rooted tree structure.

One possible approach is to apply tree edit distances to merge trees.
Tree edit distances are an established framework for measuring similarity of rooted trees~\cite{DBLP:journals/jacm/Tai79,DBLP:journals/ipl/ZhangSS92,treeEditSurvey,DBLP:journals/talg/BringmannGMW20}.
Typically, these metrics are intuitive, efficiently computable, and correspond to mappings between substructures.
Moreover, the metric property and edit mappings makes them suitable for tasks beyond simple distance computations, such as feature tracking, interpolation, or clustering~\cite{DBLP:journals/cgf/LohfinkWLWG20,DBLP:journals/tvcg/PontVDT22,wetzels2022branch}.

Specifically for merge trees, there is a rapid development of specialized tree edit distances~\cite{DBLP:journals/cgf/SaikiaSW14,DBLP:journals/tvcg/SridharamurthyM20,DBLP:journals/tvcg/SridharamurthyN23,DBLP:journals/tvcg/PontVDT22,wetzels2022branch,wetzels2022path} and their applications in various analysis tasks on scalar fields~\cite{DBLP:journals/cgf/SaikiaSW14,DBLP:journals/cgf/LohfinkWLWG20,DBLP:journals/tvcg/PontVDT22,DBLP:journals/tvcg/SridharamurthyN23}.
A major hurdle for these distances are so-called vertical and horizontal instabilities (using the notation from Saikia et al.~\cite{DBLP:journals/cgf/SaikiaSW14}).

Vertical instabilities stem from using an abstract representation instead of the merge tree itself: the persistence-based branch decomposition, derived by the so-called elder rule~\cite{edelsbrunner09} (we will omit details here).
Here, small-scale perturbations in the data can lead to a change in the hierarchy of branches, which in turn leads to poor-quality results of distances based on this hierarchy.
One of the most recent works~\cite{wetzels2022path} introduced the so-called deformation-based edit distance, which circumvents the use of branch decompositions.
The paper provided a novel notion of edit operations on merge trees as well as an algorithm for a constrained variant of this distance: the path mapping distance.
This path mapping distance is significantly less susceptible to vertical instabilities than previous methods working on persistence-based branch decompositions.

While there are existing approaches to tackling vertical instability, horizontal instabilities (sometimes also referred to as saddle swaps) remain a core problem and have been identified as such in several other works~\cite{DBLP:journals/cgf/SaikiaSW14,DBLP:journals/tvcg/SridharamurthyM20,DBLP:journals/tvcg/PontVDT22}.
Indeed, they have not been addressed by any of the existing edit distances.

In this paper, we study a merge tree edit distance that addresses both vertical and horizontal instabilities: the unconstrained deformation-based edit distance. 
This distance has been proposed but not studied in previous work~\cite{wetzels2022path}.

Our contribution is two-fold:
first, we establish the computational complexity of the unconstrained deformation-based edit distance.
It turns out that it is \NP-complete to compute the distance.
Secondly, we provide means to actually compute the distance.
We give a formulation of the problem as an integer linear program.
A collection of further optimizations enables feasible running times on merge trees of up to at least 25 vertices.
The source code of our implementation is provided in supplementary material, and we will release it in an open source repository~\cite{repository}.

We utilize this method to experimentally demonstrate the vertical and horizontal stability of this distance on two synthetic datasets as well as the well-known TOSCA shape matching ensemble (see Figure~\ref{fig:teaser}), though making heavy use of simplification.
In addition, we discuss why the distance should be expected to be more stable on a conceptual level.
A formal study of stability properties is left for future work.

After introducing required terminology in Section~\ref{sec:preliminaries}, we describe the IP formulation of the distance in Section~\ref{sec:computation}.
The proof of \NP-completeness is given in Section~\ref{sec:hardness}.
Section~\ref{sec:discussion} discusses how the proposed distance achieves stability.
Our experiments are presented in Section~\ref{sec:experiments} and Section~\ref{sec:conclusion} concludes the paper.
In the remainder of this section, we give an overview on related work.



\subsection*{Related Work}
Topological abstractions are ubiquitous in the comparison of scalar fields as well as in scientific visualization in general.
For a general introduction into topological methods for scalar field analysis, we refer to the survey by Heine et al.~\cite{heine16}.
A quite recent survey on scalar field \emph{comparisons} via topological descriptors was given by Yan et al.~\cite{surveyComparison2021}.
In this paper, we focus specifically on \emph{edit distances} between \emph{merge trees}.
A survey on edit distances between rooted trees can be found in~\cite{treeEditSurvey}.
The unconstrained edit distance studied in this paper is an adaptation of the general tree edit distances on unordered trees, which was introduced by Zhang~\cite{DBLP:journals/ipl/ZhangSS92}.

For our deformation-based distance, we use a reduction to integer linear programming that is based on the approach by Kondo et al.~\cite{DBLP:conf/dis/KondoOIY14}.
An improved method for the classic tree edit distance can be found in Hong et al.~\cite{DBLP:conf/cocoa/HongKY17}.

In the remainder of this section, we recap the most important methods for scalar field comparison through topological descriptors that are closest to our distance.

The first class of methods are other edit distances on merge trees.
Most noteworthy are the merge tree edit distance by Sridharamurthy et al.~\cite{DBLP:journals/tvcg/SridharamurthyM20}, the merge tree Wasserstein distance by Pont et al.~\cite{DBLP:journals/tvcg/PontVDT22}, the extended branch decomposition graph method by Saikia et al.~\cite{DBLP:journals/cgf/SaikiaSW14}, as well as the branch and path mapping distances by Wetzels et al.~\cite{wetzels2022branch,wetzels2022path}.
Regarding stability, none of these distances have been studied formally, which might be due to the nature of edit distances: they differ significantly from bottleneck distances which are typically used as the baseline, i.e.\ in relation to which stability is usually defined.
However, the three branch decomposition-based methods are known to be susceptible against vertical instabilities~\cite{DBLP:journals/cgf/SaikiaSW14,wetzels2022branch}.
In contrast, vertical stability was observed experimentally for the branch decomposition-independent methods by Wetzels et al.~\cite{wetzels2022branch,wetzels2022path}.
There is more work on merge tree edit distances which focuses rather on certain applications, instead of the actual distance measure~\cite{DBLP:journals/cgf/LohfinkWLWG20,DBLP:journals/tvcg/PontVDT22,DBLP:journals/tvcg/SridharamurthyN23}.

Most other distances proposed for topological descriptors are defined in the vein of bottleneck distances.
Bottleneck distances focus on measuring the largest change, instead of summing up all the changes like edit distances do.
Such methods exist for merge trees~\cite{BeketayevYMWH14,morozov14,DBLP:journals/tvcg/BollenTL23}, persistence diagrams~\cite{interleaving_distance,Cohen-Steiner2007,edelsbrunner09}, and Reeb graphs~\cite{DBLP:conf/3dor/BauerFL16,DBLP:journals/dcg/FabioL16}.
Stability properties of these distances have been studied more extensively than for edit distances.
However, none of these distances is known to be stable, discriminative, and efficiently computable at the same time~\cite{DBLP:journals/tvcg/BollenTL23}.
Moreover, most of them also lack a publicly available implementation.

Lastly, there are alternative distance measures which focus on combining topological and geometrical similarity~\cite{YanWMGW20,Yan_geometry_aware,intrinsicMTdistance,ThomasN13,DBLP:conf/apvis/NarayananTN15}.

We should note that the closest work to ours is probably the paper by Bollen et al.~\cite{DBLP:journals/tvcg/BollenTL23}: it aims to define a stable distance for merge trees, investigates the stability experimentally and derives an exponential time algorithm.
Our work differs in the type of mapping considered: we study an edit distance, whereas the distance by Bollen et al.\ is similar to a bottleneck distance or the works by Morozov et al.~\cite{morozov14} or Beketayev et al.~\cite{BeketayevYMWH14}.
The mappings considered by such distances are inherently different from edit mappings.
As stated above, edit mappings are very powerful tools to derive further visualization or analysis methods.
Moreover, we provide complexity results for the studied edit distance.

Our technique has a similar goal as the application of so-called $\epsilon$-preprocessing~\cite{DBLP:journals/tvcg/SridharamurthyM20}.
This technique simplifies the input merge trees prior to the distance computation.
It collapses short inner edges and thereby merges close saddles.
Thus, it tries to reduce the number of instabilities that appear in the input, instead of handling them through the distance.
Examples for algorithms which make use of such a preprocessing step can be found in~\cite{DBLP:journals/tvcg/SridharamurthyM20,DBLP:journals/tvcg/PontVDT22}.

While this technique fixes instabilities in some instances, it has two drawbacks.
For small values of $\epsilon$, the distance still suffers from instability: due to the fixed threshold value, larger instabilities remain in the data.
For large values, the preprocessing removes possibly important subtleties in the data, such that the distance applied afterwards can no longer reflect these details.
Clearly, this implies that the preprocessing is not a general solution to the problem, although it can yield significant improvements on certain datasets.

This stands in contrast to the unconstrained deformation-based edit distance, which we discuss in Section~\ref{sec:discussion}.
However, unconstrained edit distances introduce a significant increase in complexity whereas the preprocessing can be done in linear time.
We experimentally compare our method to the effect of $\epsilon$-preprocessing with different threshold values in Section~\ref{sec:experiments}.

\section{Preliminaries}
\label{sec:preliminaries}
In this section, we provide the formal background for this paper.
First, we define abstract merge trees, the class of trees studied throughout the paper.
Then, we introduce the unconstrained deformation-based edit distance, which was first proposed in~\cite{wetzels2022path}.
Finally, we give a short recap on integer linear programming, since we use it to compute our edit distance in practice.

\subsection*{Merge Trees}
We begin with a definition of merge trees.
In this paper, we only consider abstract merge trees, which represent exactly those graphs that can be interpreted as a merge tree for some domain.
For a detailed definition of merge trees, see~\cite{DBLP:conf/focs/EdelsbrunnerLZ00} or~\cite{DBLP:conf/ppopp/MorozovW13}.
We restate definitions from~\cite{wetzels2022branch,wetzels2022path} for the basic graph theoretic and topological concepts.
For an introduction into basic notions from computational topology and topological data analysis, we refer to~\cite{edelsbrunner09} or~\cite{heine16}. 

We consider rooted trees as directed graphs with parent edges.
In particular, a rooted tree $T$ is a directed graph with vertex set $V(T)$, edge set $E(T) \subseteq V(T) \times V(T)$ and a unique root, denoted $\troot(T)$.
We call a node $c \in V(T)$ a child of node $p \in V(T)$, if $(c,p) \in E(T)$, and, conversely, $p$ the parent of $c$.
For a node $p$, we denote its number of children by $\deg_T(p) \coloneqq |\{ c \mid (c,p) \in E(T) \}|$ and all its descendants by $\desc_T(p)$.
We omit the index $T$ when it is clear from the context.
Furthermore, we denote the empty tree, consisting only of a single node and no edges, by $\bot$.

Merge trees inherit the scalar function from their original domain and thus are labeled trees.
Usually, they are considered as node-labeled trees, where the nodes are the critical points of the scalar field and the labels being defined through the original scalar values at these critical points.
Those rooted trees that can be interpreted as merge trees for some domain of dimension at least~$2$ are called \emph{abstract} merge trees. 
These objects are the main focus of this paper.

\begin{definition*}[Abstract Merge Tree]
An unordered, rooted tree $T$ of (in general) arbitrary degree with node labels $f:V(T) \rightarrow \mathbb{R}$ is an \emph{abstract merge tree} if the following properties hold:
\begin{itemize}
    \item the root node has degree one, $\deg( \troot (T) ) = 1$,
    \item all inner nodes have a degree of at least two,\\ $\deg(v) \neq 1$ for all $v \in V(T)$ with $v \neq \troot (T)$,
    \item all nodes have a larger scalar value than their parent node, $f(c) > f(p)$ for all $(c,p) \in E(T)$.
\end{itemize}
\end{definition*}
\noindent
The deformation-based edit distance, as defined in~\cite{wetzels2022path}, works on edge-labeled trees instead.
Here, edge labels represent the length of the scalar range of the edge (we also say its \emph{persistence}).
However, these two representations are interchangeable:
given a node label function $f : V(T) \rightarrow \mathbb{R}_{>0}$, we define the corresponding edge label function $\ell_f : E(T) \rightarrow \mathbb{R}_{>0}$ by $\ell_f((u,v)) = |f(u) - f(v)|$.
Given an edge label function, we can again define $f_\ell$ by placing the root node at a fixed scalar value, e.g.~$0$.
We denote the total persistence (the sum of all edge-lengths) of an abstract merge tree $T$ by \mbox{$||T|| := \sum_{e \in E(T)} \ell(e)$}.


As for general graphs, a \emph{path} of length $k$ in an abstract merge tree $T$ is a sequence of vertices $p=v_1 ... v_k \in V(T)^k$ with $(v_{i},v_{i-1}) \in E(T)$ for all $2 \leq i \leq k$ and $v_i \neq v_j$ for all $1 \leq i,j \leq k$. Note that we only consider paths in root-to-leaf direction.
For a path $p=v_1 ... v_k$, we denote its first vertex by $\pathstart(p) \coloneqq v_1$, its last vertex by $\pathend(p) \coloneqq v_k$ and the set of all paths of a tree $T$ by $\mathcal{P}(T)$.


We lift the label function $\ell$ of an abstract merge tree $T$ from edges to paths in the following way: $\ell(v_1...v_k) = \sum_{2 \leq i \leq k} \ell((v_i,v_{i-1}))$.

\subsection*{Edit Operations}

We now recap the definition of the deformation-based edit distance which was introduced in~\cite{wetzels2022path}.
It is based on the well-established edit distance on unordered trees by Zhang~\cite{DBLP:journals/ipl/ZhangSS92}, but uses an adapted set of edit operations tailored to merge trees.
The distance considers sequences of the following three edit operations which transform one tree into another: the relabel operation changes the length or label of an edge, the deletion contracts an edge completely, and the insertion adds a new edge to the tree.

In contrast to classic tree edit distance, after a deletion, we prune remaining nodes of degree one by merging its two incident edges (and their lengths).
Figure~\ref{fig:continuousDelete} illustrates the intuition behind this specific deletion.
Insertions are then defined to be the inverse operations of these deletions.

\begin{figure}[]
  \centering
  \resizebox{\linewidth}{!}{
  \begin{tikzpicture}[yscale=0.5]
  
  \node[draw,circle,fill=gray!100,minimum width=0.5cm] at (0, 0) (root_1) {};
  \node[draw,circle,fill=gray!100,minimum width=0.5cm] at (0, 3) (s1_1) {};
  \node[draw,circle,fill=red!80,minimum width=0.5cm] at (-2, 12) (m1_1) {};
  \node[draw,circle,fill=red!80,minimum width=0.5cm] at (2, 12) (m2_1) {};
  \node[draw,circle,fill=gray!100,minimum width=0.5cm] at (1.333, 9) (s2_1) {};
  \node[draw,circle,fill=red!80,minimum width=0.5cm] at (0.3, 11) (m3_1) {};
  \draw[gray,very thick] (root_1) -- (s1_1);
  \draw[gray,very thick] (s1_1) -- (m1_1);
  \draw[gray,very thick] (s1_1) -- (s2_1);
  \draw[gray,very thick] (s2_1) -- (m2_1);
  \draw[gray,very thick] (s2_1) -- (m3_1);
  
  \node[draw,circle,fill=gray!100,minimum width=0.5cm] at (0+8, 0) (root_2) {};
  \node[draw,circle,fill=gray!100,minimum width=0.5cm] at (0+8, 3) (s1_2) {};
  \node[draw,circle,fill=red!80,minimum width=0.5cm] at (-2+8, 12) (m1_2) {};
  \node[draw,circle,fill=red!80,minimum width=0.5cm] at (2+8, 12) (m2_2) {};
  \draw[gray,very thick] (root_2) -- (s1_2);
  \draw[gray,very thick] (s1_2) -- (m1_2);
  \draw[gray,very thick] (s1_2) -- (m2_2);
  
  \node[draw,circle,fill=gray!100,minimum width=0.5cm] at (0-8, 0) (root_3) {};
  \node[draw,circle,fill=gray!100,minimum width=0.5cm] at (0-8, 3) (s1_3) {};
  \node[draw,circle,fill=red!80,minimum width=0.5cm] at (-2-8, 12) (m1_3) {};
  \node[draw,circle,fill=red!80,minimum width=0.5cm] at (2-8, 12) (m2_3) {};
  \node[draw,circle,fill=gray!100,minimum width=0.5cm] at (1.333-8, 9) (s2_3) {};
  \draw[gray,very thick] (root_3) -- (s1_3);
  \draw[gray,very thick] (s1_3) -- (m1_3);
  \draw[gray,very thick] (s1_3) -- (s2_3);
  \draw[gray,very thick] (s2_3) -- (m2_3);
  
  \draw[-latex,ultra thick] (-1.5,6) to[bend left=20] (-6.5,6) node [midway,above] {};
  \draw[-latex,ultra thick] (1.5,6) to[bend right=20] (6.5,6) node [midway,above] {};
  \node[] at (-4, 6.7) (label_1) {\huge classic delete};
  \node[] at (4, 6.7) (label_2) {\huge continuous delete};
  
  \end{tikzpicture}
  }
  \caption{Intuition behind the deformation-based edit operations: if a degree-1 node remains after a deletion, we obtain an invalid merge tree. To fix this problem, we also have to prune the remaining node.}
  \label{fig:continuousDelete}
\end{figure}
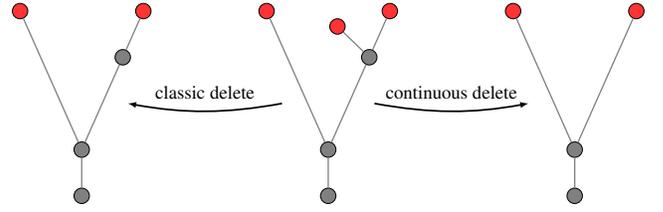

Formally, we consider the following edit operations that transform an abstract merge tree $T,\ell$ into another abstract merge tree $T',\ell'$:
\begin{itemize}
    \item Edge relabel: changing the length of an edge $(c,p)$ to a new value $v \in \mathbb{R}_{>0}$, i.e.\ $T'=T$, $\ell'((c,p)) = v$ and $\ell'(e)=\ell(e)$ for all $e \neq (c,p)$.
    \item Edge contraction: remove an edge from the tree and merge the two nodes. Then, remove the parent node if it had only two children originally. Formally, for a node $p$ with children $c_0...c_k$ and parent $p'$, we define $T'$ after contracting $(c_i,p)$ as follows: if $k>1$, we have
    $$V(T') = V(T) \setminus \{c_i\}, E(T') = E(T) \setminus \{(c_i,p)\},$$
    and otherwise, if $k=1$, we have
    $$V(T') = V(T) \setminus \{c_i,p\},$$
    $$E(T') = (E(T) \cup \{(c_{1-i},p')\}) \setminus \{(c_i,p),(c_{1-i},p),(p,p')\}.$$
    Furthermore, $\ell' = \ell$ if $k>1$, otherwise
    $$\ell'((c_{1-i},p')) = \ell((p,p'))+\ell((c_{1-i},p)),$$ and $\ell'(e)=\ell(e)$ for all $e \neq (c_{1-i},p')$.
    \item Inverse edge contraction: inverse operation to edge contraction.
\end{itemize}
We use the terms edge contractions and deletions interchangeably as well as inverse edge contractions and insertions.
If a sequence of edit operations $S$ transforms an abstract merge tree $T_1$ into $T_2$, we denote this by $T_1 \xrightarrow{\scriptscriptstyle S} T_2$.

We define the cost of an edit operation as the euclidean distance on $\mathbb{R}_{>0} \cup \{0\}$: $ \cost(l_1,l_2) = |l_1-l_2| $ for all $l_1,l_2 \in \mathbb{R}_{\geq 0}$.
Here, an edit operation is represented by the labels of the modified edges: a relabel of an edge $e$ with $(\ell(e),\ell'(e))$, whereas insertions or deletion of $e$ are represented by $(0,\ell(e))$ or $(\ell(e),0)$.
This means, for a deletion or insertion we charge the persistence of the edge, and for a relabel we charge the persistence difference between the old and new edge.

The cost of an edit sequence is then the sum of all edit operation costs: $ \cost(s_1 \dots s_k) = \sum_{1 \leq i \leq k} |\cost(s_i)| $.
Based on this, the unconstrained deformation-based edit distance between two trees $T_1,T_2$ is defined as the cost of a cost-minimal sequence that transforms $T_1$ into $T_2$:
$$ \edist(T_1,T_2) = \min\{ \cost(S) \mid T_1 \xrightarrow{\scriptscriptstyle S} T_2 \}. $$
This differs from its constrained variant: the path mapping distance is defined as the cost of an optimal sequence only using deletions and insertions on leafs.
In this paper, we focus on the computation and complexity of the unconstrained variant, in contrast to prior work~\cite{wetzels2022path} only studying the constrained path mapping distance.

\subsection*{Integer Linear Programming}
We use integer linear programming (IP) to compute our edit distance. 
An IP instance consists of a number of variables, an objective function, and a set of (linear) constraints. The goal is to minimize the objective function, while the constraints must be satisfied.

Formally, we expect an IP instance to be in the following form:  
\begin{equation*}
  \begin{array}{ll@{}ll}
  \text{minimize}  & \displaystyle c^T x& &\\
  \text{subject to}& Ax \geq b&\\
                   & x \geq 0\\
                   & x \in \mathbb{Z}^n
  \end{array}
  \end{equation*}
where $x$ is the vector of $n$ variables, $\mathbf{c} \in \mathbb{R}^n$ the weights used for the objective function, and $A \in \mathbb{R}^{n \times m}$ a matrix used to denote constraints. 
Note that the variables are constrained to be integers.

The problem is well-known to be \NP-hard.
Even still, integer programming is often used to express and solve other \NP-hard optimization problems in practice, for example problems stemming from logistics or combinatorics. 
There are powerful solvers available, such as \gurobi{} \cite{gurobi}, \cplex{} \cite{cplex2009v12}, or \scip{} \cite{BestuzhevaEtal2021OO}.

\section{Computation}
\label{sec:computation}

Algorithms for tractable variants of edit distances usually compute the optimal \emph{edit mapping} (explained later in more detail) between the two trees by exploiting their recursive structure using dynamic programming techniques.
This holds for edit distances on general trees (e.g.~\cite{DBLP:journals/algorithmica/Zhang96}) as well as merge tree-specific methods (e.g.~\cite{wetzels2022path}).
For non-tractable variants such as the general tree edit distance on unordered trees, it is more common to compute the optimal mapping through a reduction to integer programming~\cite{DBLP:conf/dis/KondoOIY14} or combinations of recursive decompositions and linear optimizations, see~\cite{DBLP:conf/cocoa/HongKY17}.

Our approach is to adapt the conventional reduction to integer programming, which we describe in the following.

\subsection*{Edit Mappings}

We first recap the concept of edit mappings~\cite{DBLP:journals/ipl/ZhangSS92,treeEditSurvey}.
Edit mappings represent edit sequences between two trees in an abstract way.
Each edit mapping corresponds to an edit sequence.
In terms of the resulting distance, the optimal edit mapping and optimal edit sequence are guaranteed to be equivalent. 

Let $T_1,T_2$ be two trees with $T_1 \xrightarrow{\scriptscriptstyle S} T_2$.
If a node $v \in V(T_1)$ is not touched by any operation in $S$ or only by relabels, then it has a clear correspondent $v' \in V(T_2)$.
Thus, $(v,v')$ is part of the edit mapping corresponding to $S$.
If a node is deleted or inserted by $S$, it is not present in the mapping.
The way edit operations on trees work induces the following properties of an edit mapping $M \subseteq V(T_1) \times V(T_2)$ between the nodes of two trees:
(a) They are one-to-one mappings, i.e.\ for all $(v_1,v_2) \in M$ and $(u_1,u_2) \in M$ we have $v_1 = u_1$ if and only if $v_2 = u_2$.
(b) They are ancestor preserving, i.e.\ for all $(v_1,v_2) \in M$ and $(u_1,u_2) \in M$ we have $v_1$ is an ancestor of $u_1$ if and only if $v_2$ is an ancestor of $u_2$.

As mentioned above, the cost of an optimal mapping (we omit the exact definition) is exactly the cost of an optimal edit sequence.
Hence, we may consider edit mappings instead of edit sequences when computing the edit distance.
In fact, practical algorithms usually iterate over edit mappings instead of edit sequences.
In the following, we also make use of edit mappings.

As discussed earlier, for our deformation-based edit distance, the edit operations differ from the typical edit distances.
Hence, we also need to adapt the concept of edit mappings accordingly.
We extend the concept of the edit mappings above.
More specifically, we use the conventional edit mappings on nodes (as defined above) as our foundation, but redefine the cost function for a given edit mapping to reflect precisely our deformation-based distance.


Given two trees $T_1,T_2$ and an ancestor-preserving node mapping $M$ between them, we now define the costs $\cost(M)$ of $M$.
We begin by defining a few variables for each node in order to ease the process.
Given a node $v \in V(T_1)$, we first define $\delvar_v = 1$ whenever $v \notin M$, and $\delvar_v = 0$ otherwise.
Thus, $\delvar_v$ denotes if $v$ is deleted in $M$.
Based on this, we define $\ddelvar_v = \bigwedge_{u \in \desc(v)} \delvar_u$. 
Hence, $\ddelvar_v$ is $1$ whenever the whole subtree rooted in $v$ is deleted, and $0$ otherwise.

Next, we define so-called \emph{pruned} nodes.
Recall the definition of a continuous deletion on a merge tree:
if a node of degree $1$ remains after a deletion, we replace its up- and down-edge with one merged edge and remove the node.
We call this \emph{pruning} of a node (see Figure~\ref{fig:continuousDelete}).
The merged edge in the resulting tree corresponds to a path in the tree prior to the deletion.
Thus, the start and end vertices of this path are in an imaginary parent-child relation.

We now define variables which express pruned nodes.
Indeed, we observe that a node is pruned if and only if all but one of its subtrees are completely deleted.
Thus, we may define the pruning variable $\prunedvar_v$ for a node $v$ with children $c_1, c_2, \dots c_{\deg(v)}$ to be $1$ if and only if $\sum_{1 \leq i \leq \deg(v)} (1-\ddelvar_{c_i}) = 1$.

Next, we define variables representing the imaginary parent for each node in the two trees, deriving them from the deleted nodes and pruned nodes.
Intuitively, the imaginary parent of a node is its least ancestor that is not pruned.
An additional constraint is that we only define an imaginary parent for nodes that are not deleted.
Formally, we define a variable $\parentvar_{u,v}$ for each pair of nodes from the same tree.
Then, $\parentvar_{u,v} = 1$ if and only if $\delvar_v = 0$ and $\prunedvar_w = 1$ for each node $w$ between $v$ and $u$.

As a last step, we can derive the relabel operations of the mapping $M$.
They are represented by mapped nodes together with their imaginary parents.
All other edges in the trees that are not covered by these relabels can be assumed to be deleted.
The costs of the mapping $M$ are then defined to be
$$ \sum_{q_1,q_2 \in \mathcal{P}_1 \times \mathcal{P}_2} \parentvar_{\pathstart_1, \pathend_1} \cdot \parentvar_{\pathstart_2, \pathend_2,} \cdot (\cost(q_1,q_2)) $$
$$ - \sum_{\ell(q_1),\ell(q_2) \in \mathcal{P}_1 \times \mathcal{P}_2} \parentvar_{\pathstart_1, \pathend_1} \cdot \parentvar_{\pathstart_2, \pathend_2} \cdot  (\cost(\ell(q_1),0) + \cost(0,\ell(q_2))) $$
$$ + \sum_{(u_1,v_1) \in E(T_1)} \cost(\ell(u_1v_1),0) + \sum_{(u_2,v_2) \in E(T_2)} \cost(0,\ell(u_2v_2)) $$
where $\mathcal{P}_1 = \mathcal{P}(T_1)$, $\mathcal{P}_2 = \mathcal{P}(T_2)$, $\pathstart_1 = \pathstart(q_1)$, $\pathstart_2 = \pathstart(q_2)$, $\pathend_1 = \pathend(q_1)$ and $\pathend_2 = \pathend(q_2)$.

\subsection*{IP Definition}
For the IP instance, we first define variables for the basic mapping $M$.
For each pair of nodes $u,v \in V(T_1) \times V(T_2)$, we introduce a variable $\mapvar_{u,v}$ with $\mapvar_{u,v} = 1$ if and only if $(u,v) \in M$.
From the definition of edit mappings, we derive the same constraints on the mapping variables as in~\cite{DBLP:conf/dis/KondoOIY14}:
\begin{itemize}
  \item $\sum_{v \in V(T_2)} \mapvar_{u,v} \leq 1$ for all $u \in V(T_1)$ and $\sum_{u \in V(T_1)} \mapvar_{u,v} \leq 1$ for all $v \in V(T_2)$,
  \item $\mapvar_{u,v} + \mapvar_{x,y} \leq 1$ for all tuples $u,v,x,y$ contracting the ancestor preservation.
\end{itemize}

Next, we add all the variables and constraints based on the previous section.
This in turn expresses the special cost function for the deformation-based distance, which is in turn used as the objective for the integer program.    
We constrain all variables to $\{0,1\}$ values. 
However, note that our cost function is actually \emph{not} yet linear: we are multiplying $\parentvar_{\pathstart_1, \pathend_1} \cdot \parentvar_{\pathstart_2, \pathend_2}$.
Fortunately, we only need to compute $\parentvar_{\pathstart_1, \pathend_1} \wedge \parentvar_{\pathstart_2, \pathend_2}$ where $\parentvar_{\pathstart_1, \pathend_1}$ and $\parentvar_{\pathstart_2, \pathend_2}$ are treated as boolean variables. 
This is possible in integer programs using the introduction of an additional variable for said operation: we set $\pmvar_{\pathend_1,\pathstart_1,\pathend_2,\pathstart_2} := \parentvar_{\pathstart_1, \pathend_1} \wedge \parentvar_{\pathstart_2, \pathend_2}$.

We observe that the total number of variables is dominated by the
$\pmvar_{\pathstart_1,\pathend_1,\pathstart_2,\pathend_2}$
variables.
Hence, in the following, where we discuss optimizations, we focus on reducing the number of $\pmvar_{\pathstart_1,\pathend_1,\pathstart_2,\pathend_2}$ variables. 

\subsection*{Optimizations: Reencoding}
The encoding as described above turns out to be quite slow, even for very small examples. 
Our main strategy for optimization is to use a \emph{reencoding} scheme (loosely inspired by \cite{DBLP:conf/hvc/MantheyHB12}). 

Reencoding means we run \gurobi{} for a while, extract an upper bound for the edit distance (by using the best solution computed up to a certain point), and then reencode the problem more concisely using the newly found upper bound.
Generally, we perform the scheme using an exponential backoff strategy: we start with a time limit of $10$s for \gurobi{}, doubling the time limit in each iteration. 
After each iteration, if an improved upper bound is found, we reencode the problem.

The new encoding can be more compact as a given upper bound lets us discard potential edit mappings:
if a single mapping of nodes (or paths) already implies costs greater than the current upper bound, we know that this mapping can not occur in any optimal solution.  

This matches precisely a single $\pmvar_{\pathstart_1,\pathend_1,\pathstart_2,\pathend_2}$ variable: if we can deem the mapping of the path $(\pathstart_1 \dots \pathend_1)$ to $(\pathstart_2 \dots \pathend_2)$ too expensive (above the current upper bound), we can remove the variable $\pmvar_{\pathstart_1,\pathend_1,\pathstart_2,\pathend_2}$, all corresponding constraints, as well as variables which were only used in correspondence to $\pmvar_{\pathstart_1,\pathend_1,\pathstart_2,\pathend_2}$.

The question that therefore remains is, how can we approximate the cost implied by a single mapping of paths $\pmvar_{\pathstart_1,\pathend_1,\pathstart_2,\pathend_2}$? More specifically, we need to be able to give a lower bound for these costs.
Given two trees $T_1, T_2$ and a $\pmvar_{\pathstart_1,\pathend_1,\pathstart_2,\pathend_2}$ variable, we provide a collection of cost approximations which we can derive:
\begin{enumerate}
\item  The cost of mapping the paths themselves: the absolute difference between the persistence of the first path and the second path, i.e.\ $\cost(\ell(\pathstart_1 \dots \pathend_1),\ell(\pathstart_2 \dots \pathend_2))$.
\item If $(\pathstart \dots \pathend)$ is a path, all nodes on the path (strictly inbetween $\pathstart$ and $\pathend$) must be pruned. This means all subtrees of these nodes, which are not part of the path must be deleted, and the cost of these deletions can be assumed.
\item Due to the ancestor relationship, the subtrees rooted in $\pathend_1$ and $\pathend_2$ must be matched onto each other. 
There are several ways to compute lower bounds for this: let $T_1(\pathend_1)$ and $T_2(\pathend_2)$ denote the subtrees rooted in $\pathend_1$ and $\pathend_2$, respectively. 
A valid lower bound is then $\edist(T_1(\pathend_1), T_2(\pathend_2))$. 
We compute this using our IP strategy recursively (using a time limit). 
However, before we do so, we first approximate the cost by computing the absolute difference between the total persistence of $T_1(\pathend_1)$ and $T_2(\pathend_2)$. 
Only if this crude approximation (in conjunction with the other cost approximations) does not yet suffice to exclude the given variable, we fall back to the recursive IP approach.
\item Due to the ancestor relationship, the overall trees up to $\pathstart_1$ and $\pathstart_2$, removing the trees below $\pathstart_1$ and $\pathstart_2$, must be matched onto each other. 
Again, we first use the persistence difference as a crude approximation first, and then rely on the recursive IP approach.
\end{enumerate}
Whenever the sum of these costs exceeds the current upper bound for a variable $\pmvar_{\pathstart_1,\pathend_1,\pathstart_2,\pathend_2}$, we remove the variable. 
All in all, for most instances in our experiments, this substantially reduces the number of possible mappings.

\subsection*{Optimizations: Symmetry Reduction}
We also reduce a number of symmetries in the IP formulation.
For a given variable $\pmvar_{\pathstart_1,\pathend_1,\pathstart_2,\pathend_2}$, assume $\pathend_1$ is a leaf while $\pathend_2$ is not. 
In this case, no node below $\pathend_2$ can be mapped to any node of the first tree, due to the ancestor relationship.
Hence, all nodes below $\pathend_2$ need to be deleted.

Instead, let us choose \emph{any} leaf below $\pathend_2$, say, $\pathend_l$.
Now since in the case where we use $\pathend_2$ we need to remove the entire subtree -- including $\pathend_l$ -- anyway, using $\pathend_l$ instead of $\pathend_2$ can indeed not yield a more expensive edit mapping.
Hence, we only consider those $\pmvar_{\pathstart_1,\pathend_1,\pathstart_2,\pathend_2}$ where $\pathend_1$ is a leaf if and only if $\pathend_2$ is a leaf.

Interestingly, in our testing, this optimization sped up \gurobi{} considerably, while it seemed to marginally slow down \cplex{}.

We also apply a similar reduction for the roots of the trees.
We observe that the only degree-1 node in a merge tree is its root.
Hence, after deleting it, we have to construct a new degree-1 root again, since the tree resulting from an edit sequence is a merge tree (after all, we are only comparing merge trees).
Thus, with similar arguments as in the leaf case, enforcing the roots to be mapped onto each does not yield a worse edit mapping.
We disregard all tuples $\pmvar_{\pathstart_1,\pathend_1,\pathstart_2,\pathend_2}$ where $\pathstart_1$ is the root of its tree while $\pathstart_2$ is not (or vice versa).

\section{NP-completeness}
\label{sec:hardness}
We now prove \NP-completeness of the decision variant of computing the unconstrained deformation distance.
The decision variant asks whether the distance $\edist$ between two trees is below a given cost threshold $c$.
Our reduction is an adaptation of the reduction for the edit distance on unordered trees by Zhang~\cite{DBLP:journals/ipl/ZhangSS92}. 
\begin{theorem}
  Deciding whether $\edist(T_1,T_2)$ of two abstract merge trees $T_1,T_2$ is below a given threshold $c$ is \NP-complete. 
\end{theorem}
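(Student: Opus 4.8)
The plan is to establish the two directions separately: membership in \NP{} and \NP-hardness.

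For membership, I would use an edit mapping as a polynomial-size certificate. By the equivalence recalled in Section~\ref{sec:computation}, the cost of a cost-optimal edit mapping equals the cost of a cost-optimal edit sequence, so $\edist(T_1,T_2) \leq c$ holds if and only if there is an ancestor-preserving, one-to-one node mapping $M \subseteq V(T_1) \times V(T_2)$ with $\cost(M) \leq c$. Such a mapping has size at most $|V(T_1)| \cdot |V(T_2)|$, and both of its defining properties can be checked in polynomial time. Its cost can also be evaluated in polynomial time: each tree has only $O(n^2)$ paths, so the path-based cost formula from Section~\ref{sec:computation} is a sum over $O(n^4)$ pairs, each term computable directly from the derived $\delvar$, $\prunedvar$, and $\parentvar$ quantities. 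Hence a suitable mapping certifies a ``yes'' instance, establishing that the problem lies in \NP{}.

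For hardness, I would reduce from Exact Cover by 3-Sets, the \NP-complete covering problem used in Zhang's reduction for unordered trees~\cite{DBLP:journals/ipl/ZhangSS92}, and adapt his gadgets to our setting. Given a universe and a family of three-element subsets, the idea is to build two abstract merge trees $T_1,T_2$ -- one encoding the elements, one encoding the candidate sets -- so that a low-cost edit mapping is forced to align each element with exactly one selected set, and an exact cover exists if and only if $\edist(T_1,T_2)$ drops below a calibrated threshold $c$. The combinatorial constraints are enforced by the two structural properties of edit mappings: the one-to-one property forbids covering an element twice, and ancestor preservation ties each element-gadget to the set-gadget that contains it. To make illegitimate alignments provably expensive, I would assign the gadget edges persistences that are large and pairwise well-separated (e.g.\ geometrically growing), so that any relabel between mismatched gadgets already exceeds $c$, leaving only the intended correspondences cheap. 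A further technical point is that every gadget must itself be a valid abstract merge tree: a degree-one root, every other inner node of degree at least two, and all edge persistences strictly positive; I would pad the gadgets with auxiliary children and a dedicated root edge, folding their fixed deletion cost into the baseline and shifting $c$ accordingly.

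The main obstacle will be the \emph{continuous} (pruning) deletion that distinguishes the deformation-based distance from classic tree edit distance. Because a deletion here merges an edge with its remaining sibling edge and the cost is measured along entire \emph{paths} rather than single edges, a clever sequence of prunings could in principle splice together persistences to realize a cheap path-relabel that has no counterpart in Zhang's node-based analysis, and thereby meet the threshold without a genuine exact cover. The heart of the proof is to rule this out: I would choose the gadget persistences so that the persistence of any spliced path differs from every legitimate target path by more than the slack permitted by $c$ (for instance, by keeping the per-element persistences incommensurable with the gadget backbone lengths), guaranteeing that pruning can never beat the honest mapping. Verifying this robustness of the cost calibration against \emph{all} admissible prunings -- rather than merely against node-to-node substitutions -- is the delicate, distance-specific step; once it is in place, the element/set encoding transfers from Zhang's argument essentially unchanged.
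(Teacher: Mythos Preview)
Your overall strategy matches the paper: reduce from Exact Cover by 3-Sets, build element- and set-gadgets in the spirit of Zhang, and argue that a cover exists iff the distance meets a calibrated threshold. Your \NP{} membership argument via edit mappings is fine (arguably cleaner than the paper's one-line appeal to edit sequences).

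Where you diverge is in the mechanism for the hard direction (``no cover $\Rightarrow$ distance above threshold''). You propose to rule out illegitimate alignments and spliced-path shortcuts by making gadget persistences large, well-separated, or incommensurable, so that any wrong relabel already overshoots the slack in $c$. The paper instead uses a single global invariant: it arranges the gadgets so that the threshold $c$ equals exactly the total-persistence difference $||T_1|| - ||T_2||$. Since every edit sequence has cost at least $\big|\,||T_1|| - ||T_2||\,\big|$, a sequence of cost \emph{exactly} $c$ can contain only deletions and strictly shortening relabels. This monotonicity constraint kills the pruning/splicing worry in one stroke: with no insertions or lengthening available, the $G$- and $G_i$-subtrees of $T_2$ can only be produced from the element gadgets already present in $T_1$, which forces a genuine cover. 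No separation or incommensurability analysis is needed.

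Your plan is not wrong, but the step you flag as ``the delicate, distance-specific step'' --- verifying robustness of the cost calibration against all admissible prunings --- is exactly what the paper's persistence-difference trick makes unnecessary. If you pursue your route, you would have to exhaustively bound the cost of every path-to-path relabel reachable via pruning, which is doable but messy; the paper's choice of $c = ||T_1|| - ||T_2||$ collapses that case analysis to a one-line monotonicity argument.
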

\begin{proof}
We prove the claim by a reduction from exact covering by 3-sets (X3C) to computing the deformation-based edit distance. We begin by recalling the definition of the X3C problem. 

\textit{(Definition of X3C.)}
We define a universe $U := \{u_1, u_2, \dots{}, u_m\}$ of $m=3k$ elements, and a collection of subsets of these elements $S := \{S_1, S_2, \dots{}, S_n\}$.
Each $S_i \in S$ has precisely $3$ elements, i.e.\ $S_i := \{u_{i,1}, u_{i,2}, u_{i,3}\}$.
Furthermore, every element $u \in U$ is contained in at most $3$ sets of $S$.
Given a universe $U$ and collection of sets $S$, the goal is to check whether there is a $C \subseteq S$, where $C$ is a \emph{disjoint} collection of sets and each element in $U$ occurs in exactly one member of $C$.
The X3C problem is known to be \NP-complete~\cite{DBLP:books/fm/GareyJ79}.

\textit{(Reduction to X3C.)}
We now reduce X3C to our problem.
Given an X3C instance with universe $U$ and sets $S$, we construct the trees $T_1$ and $T_2$ as follows (the construction is illustrated in Figure~\ref{fig:np_reduction}).
We begin by defining an element gadget $G_i$ for each element $u_i \in U$, which are illustrated in Figure~\ref{fig:label_gadget}.
Each gadget is derived from the base gadget $G$ which consists of $m$ edges $e_1 , \dots , e_m$ attached to its root $r$ with the length of $e_i$ being $2i$.
Then, in $G_i$, we add another edge $e_0$ of length $1$ which splits $e_i$ exactly in halves.
This leads to the following property: the distance between each $G_i$ and $G$ is exactly $1$, since we simply have to contract the edges $e_0$ in $G_i$.
Furthermore, for $i \neq j$, the distance between $G_i$ and $G_j$ is exactly $2$, since we need to delete $e_0$ in one tree and insert it in the other.

\begin{figure}
  \centering
  \begin{tikzpicture}[scale=0.8,every node/.style={minimum size=0.2cm,inner sep=0,fill=gray!100,draw=black,circle},yscale=-1]
    \node at (0,0) (r) {};
    \node[fill=red!80] at (-2,-1) (a1) {};
    \node[fill=red!80] at (-1,-2) (a2) {};
    \node[draw=white!0,fill=white!0] at (0,-1.5) {$\dots{}$};
    \node[fill=red!80] at (2,-3)  (a3) {};

    \draw[] (r) -- (a1) node[midway, fill=white,draw=none,inner sep = 0.5] {$2$};
    \draw[] (r) -- (a2) node[midway, fill=white,draw=none,inner sep = 0.5] {$4$};
    \draw[] (r) -- (a3) node[midway, fill=white,draw=none,inner sep = 0.5] {$2m$};

    \node[draw=white!0,fill=white!0]  at (0,-3.5) (g) {\color{white}$G_i$};
    \node[draw=white!0,fill=white!0]  at (0,-3.5) (g) {$G$};
  \end{tikzpicture} \hspace{0.5cm}
  \begin{tikzpicture}[scale=0.8,every node/.style={minimum size=0.2cm,inner sep=0,fill=gray!100,draw=black,circle},yscale=-1]
    \node at (0,0) (r) {};
    \node[fill=red!80] at (-2,-1) (a1) {};
    \node at (0,-1.25) (ai) {};
    \node[fill=red!80] at (0,-2.5) (aii) {};
    \node[fill=red!80] at (-0.75,-1.75) (aii1) {};
    \node[draw=white!0,fill=white!0] at (-1,-1) {$\dots{}$};
    \node[draw=white!0,fill=white!0] at (1,-2.5) {$\dots{}$};
    \node[fill=red!80] at (2,-3)  (a3) {};

    \draw[] (r) -- (a1) node[midway, fill=white,draw=none,inner sep = 0.5] {$2$};
    \draw[] (r) -- (ai) node[midway, fill=white,draw=none,inner sep = 1.5] {$i$};
    \draw[] (ai) -- (aii) node[midway, fill=white,draw=none,inner sep = 1.5] {$i$};
    \draw[] (ai) -- (aii1) node[midway, fill=white,draw=none,inner sep = 0] {$1$};
    \draw[] (r) -- (a3) node[midway, fill=white,draw=none,inner sep = 0.5] {$2m$};

    \node[draw=white!0,fill=white!0]  at (0,-3.5) (g) {$G_i$};
  \end{tikzpicture}
  \caption{The element gadgets $G$ and $G_i$.} \label{fig:label_gadget}
\end{figure}
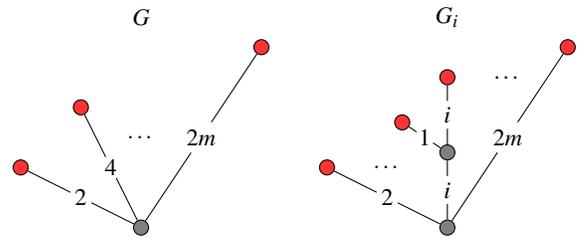

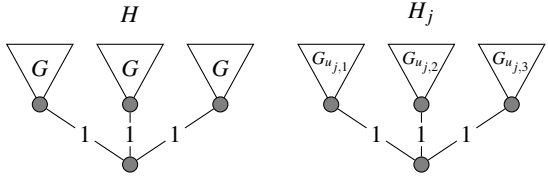
\begin{figure}
  \centering
  \begin{tikzpicture}[scale=0.8,every node/.style={minimum size=0.2cm,inner sep=0,fill=gray!100,draw=black,circle},yscale=-1]
    \pgfmathsetmacro{\trianglew}{0.55}
    \pgfmathsetmacro{\triangleh}{1}

    \draw[draw=black] (-1.5,-1) -- (-1.5 -\trianglew,-1 -\triangleh) -- (-1.5 + \trianglew,-1 -\triangleh) -- cycle;
    \draw[draw=black] (0,-1) -- (0 -\trianglew,-1 -\triangleh) -- (0 + \trianglew,-1 -\triangleh) -- cycle;
    \draw[draw=black] (1.5,-1) -- (1.5 -\trianglew,-1 -\triangleh) -- (1.5 + \trianglew,-1 -\triangleh) -- cycle;

    \node at (0,0) (r) {};
    \node at (-1.5,-1) (a1) {};
    \node at (0,-1) (a2) {};
    \node at (1.5,-1)  (a3) {};

    \draw[] (r) -- (a1) node[midway, fill=white,draw=none,inner sep = 0.5] {$1$};
    \draw[] (r) -- (a2) node[midway, fill=white,draw=none,inner sep = 0.5] {$1$};
    \draw[] (r) -- (a3) node[midway, fill=white,draw=none,inner sep = 0.5] {$1$};

    \node[draw=white!0,fill=white!0]  at (-1.5,-1-0.6) (g) {$G$};
    \node[draw=white!0,fill=white!0]  at (0,-1-0.6) (g) {$G$};

    \node[draw=white!0,fill=white!0]  at (1.5,-1-0.6) (g) {$G$};

    \node[draw=white!0,fill=white!0]  at (0,-2.5) (g) {\color{white}$H_j$};
    \node[draw=white!0,fill=white!0]  at (0,-2.5) (g) {$H$};
  \end{tikzpicture}\hspace{0.5cm}
  \begin{tikzpicture}[scale=0.8,every node/.style={minimum size=0.2cm,inner sep=0,fill=gray!100,draw=black,circle},yscale=-1]
    \pgfmathsetmacro{\trianglew}{0.55}
    \pgfmathsetmacro{\triangleh}{1}

    \draw[draw=black] (-1.5,-1) -- (-1.5 -\trianglew,-1 -\triangleh) -- (-1.5 + \trianglew,-1 -\triangleh) -- cycle;
    \draw[draw=black] (0,-1) -- (0 -\trianglew,-1 -\triangleh) -- (0 + \trianglew,-1 -\triangleh) -- cycle;
    \draw[draw=black] (1.5,-1) -- (1.5 -\trianglew,-1 -\triangleh) -- (1.5 + \trianglew,-1 -\triangleh) -- cycle;

    \node at (0,0) (r) {};
    \node at (-1.5,-1) (a1) {};
    \node at (0,-1) (a2) {};
    \node at (1.5,-1)  (a3) {};

    \draw[] (r) -- (a1) node[midway, fill=white,draw=none,inner sep = 0.5] {$1$};
    \draw[] (r) -- (a2) node[midway, fill=white,draw=none,inner sep = 0.5] {$1$};
    \draw[] (r) -- (a3) node[midway, fill=white,draw=none,inner sep = 0.5] {$1$};

    \node[fill=none,draw=none]  at (-1.5,-1-0.7) (g) {\small $G_{u_{j,1}}$};
    \node[fill=none,draw=none]  at (0,-1-0.7) (g) {\small $G_{u_{j,2}}$};
    \node[fill=none,draw=none]  at (1.5,-1-0.7) (g) {\small $G_{u_{j,3}}$};

    \node[draw=white!0,fill=white!0]  at (0,-2.5) (g) {\color{white}$H_j$};
    \node[draw=white!0,fill=white!0]  at (0,-2.5) (g) {$H_j$};
  \end{tikzpicture} 
  \caption{The dummy set gadget $H$ and set gadget $H_j$.} \label{fig:set_gadget}
\end{figure}

\begin{figure}
  \centering
  \begin{tikzpicture}[scale=0.8,every node/.style={minimum size=0.2cm,inner sep=0,fill=gray!100,draw=black,circle},yscale=-1]
    \pgfmathsetmacro{\trianglew}{0.55}
    \pgfmathsetmacro{\triangleh}{1}

    \draw[draw=black] (-2.5,-1) -- (-2.5 -\trianglew,-1 -\triangleh) -- (-2.5 + \trianglew,-1 -\triangleh) -- cycle;
    \draw[draw=black] (-1,-1) -- (-1 -\trianglew,-1 -\triangleh) -- (-1 + \trianglew,-1 -\triangleh) -- cycle;
    \draw[draw=black] (1,-1) -- (1 -\trianglew,-1 -\triangleh) -- (1 + \trianglew,-1 -\triangleh) -- cycle;

    \node at (0,0.75) (pr) {};
    \node at (0,0) (r) {};
    \node at (-2.5,-1) (a1) {};
    \node at (-1,-1) (a2) {};
    \node at (1,-1)  (a3) {};

    \draw[] (pr) -- (r) node[midway, fill=white,draw=none,inner sep = 0.25] {$1$};
    \draw[] (r) -- (a1) node[midway, fill=white,draw=none,inner sep = 0.25] {$1$};
    \draw[] (r) -- (a2) node[midway, fill=white,draw=none,inner sep = 0.25] {$1$};
    \draw[] (r) -- (a3) node[midway, fill=white,draw=none,inner sep = 0.25] {$1$};

    \node[draw=white!0,fill=white!0]  at (-2.5,-1-0.7) (g) {$H_1$};
    \node[draw=white!0,fill=white!0]  at (-1,-1-0.7) (g) {$H_2$};
    \node[draw=white!0,fill=white!0]  at (1,-1-0.7) (g) {$H_n$};

    \node[draw=white!0,fill=white!0]  at (0,-1) (g) {$\dots{}$};

    \node[draw=white!0,fill=white!0]  at (1,0.25) (g) {\Large$T_1$};

    \draw[draw=white] (4.5,-1) -- (4.5 -\trianglew,-1 -\triangleh) -- (4.5 + \trianglew,-1 -\triangleh) -- cycle;
  \end{tikzpicture}\hspace{0.5cm}
  \begin{tikzpicture}[scale=0.8,every node/.style={minimum size=0.2cm,inner sep=0,fill=gray!100,draw=black,circle},yscale=-1]
    \pgfmathsetmacro{\trianglew}{0.55}
    \pgfmathsetmacro{\triangleh}{1}

    \draw[draw=black] (-2.5,-1) -- (-2.5 -\trianglew,-1 -\triangleh) -- (-2.5 + \trianglew,-1 -\triangleh) -- cycle;
    \draw[draw=black] (-1,-1) -- (-1 -\trianglew,-1 -\triangleh) -- (-1 + \trianglew,-1 -\triangleh) -- cycle;
    \draw[draw=black] (1,-1) -- (1 -\trianglew,-1 -\triangleh) -- (1 + \trianglew,-1 -\triangleh) -- cycle;

    \draw[draw=black] (2.5,-1) -- (2.5 -\trianglew,-1 -\triangleh) -- (2.5 + \trianglew,-1 -\triangleh) -- cycle;
    \draw[draw=black] (4.5,-1) -- (4.5 -\trianglew,-1 -\triangleh) -- (4.5 + \trianglew,-1 -\triangleh) -- cycle;

    \node at (0,0.75) (pr) {};
    \node at (0,0) (r) {};
    \node at (-2.5,-1) (a1) {};
    \node at (-1,-1) (a2) {};
    \node at (1,-1)  (a3) {};
    \node at (2.5,-1) (a4) {};
    \node at (4.5,-1)  (a5) {};

    \draw[] (pr) -- (r) node[midway, fill=white,draw=none,inner sep = 0.25] {$1$};
    \draw[] (r) -- (a1) node[midway, fill=white,draw=none,inner sep = 0.25] {$1$};
    \draw[] (r) -- (a2) node[midway, fill=white,draw=none,inner sep = 0.25] {$1$};
    \draw[] (r) -- (a3) node[midway, fill=white,draw=none,inner sep = 0.25] {$1$};
    \draw[] (r) -- (a4) node[midway, fill=white,draw=none,inner sep = 0.25] {$1$};
    \draw[] (r) -- (a5) node[midway, fill=white,draw=none,inner sep = 0.25] {$1$};

    \node[draw=none,fill=none]  at (-2.5,-1-0.7) (g) {$G_1$};
    \node[draw=none,fill=none]  at (-1,-1-0.7) (g) {$G_2$};
    \node[draw=none,fill=none]  at (1,-1-0.7) (g) {$G_m$};

    \node[draw=none,fill=none]  at (2.5,-1-0.7) (g) {$H$};
    \node[draw=none,fill=none]  at (4.5,-1-0.7) (g) {$H$};

    \draw [very thick,decorate,
        decoration = {calligraphic brace,mirror}] (4.5+0.7,-2.125) -- (2.5-0.7,-2.125);

    \node[fill=none,draw=none]  at (3.5,-2.5) (g) {$n-k$};

    \node[fill=none,draw=none]  at (0,-1) (dot1) {$\dots{}$};
    \node[fill=none,draw=none]  at (3.5,-1) (dot2) {$\dots{}$};

    \node[fill=none,draw=none]  at (1,0.4) (g) {\Large$T_2$};
  \end{tikzpicture}
  \caption{The two trees used in the reduction.} \label{fig:np_reduction}
\end{figure}
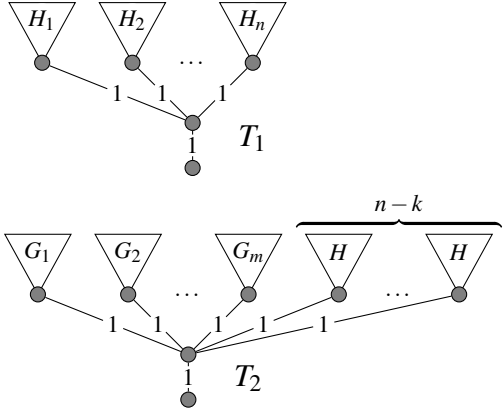

Based on the element gadgets, we next define set gadgets $H_i$ for each set $S_i \in S$ and a dummy set gadget $H$ (see Figure~\ref{fig:set_gadget}).
$H_i$ consists of a root node and the three element gadgets corresponding to the elements $u_{i,1}, u_{i,2}, u_{i,3}$ of $S_i$. The roots of the three element gadgets are connected to the new root via an edge of length $1$.
$H$ is constructed analogously, but uses three instances of $G$.

Now we construct the final trees $T_1$ and $T_2$.
In $T_1$, we encode each set $S_i \in S$ by first attaching the tree $H_i$ to the root via an edge of length $1$.
In $T_2$, we append for each element $u_i$ of the universe $U$ the tree $G_i$ and $n-k$ gadgets $H$ to the root, again via edges of length $1$.
For these two trees it then holds that $d(T_1,T_2) = 3n-2k$ if there is a covering of $U$ by $S$ and $d(T_1,T_2) > 3n-2k$ otherwise, which we prove in the following.

First, note that the total persistences of the two trees are $$|| T_1 || = 1 + n + 3n (2 + m(m+1)) = 1 + n + 6n + 3nm(m+1)$$ and 
\begin{align*}
  &|| T_2 || = \\ 
  &1 + 3k + 3k (1 + m(m+1)) + (n-k) + 3(n-k) (1 + m(m+1))\\ 
  &= 1 + n + 2k + 3n + 3nm(m+1).
\end{align*}
Thus, $$|| T_1 || - || T_2 || = 6n - (3n + 2k) = 3n - 2k.$$
\florian{poor formatting}

If there is a covering $C \subseteq S$ of $U$, then we can use the following sequence to convert $T_1$ into $T_2$.
For each of the $k$ sets in $C$, we contract the edge between the root of $T_1$ and the root of the corresponding gadget $H_i$.
Next, we transform each $(n-k)$ remaining gadgets $H_j$ into $H$ by contracting the three splitting edges in the element trees.
The total cost of this sequence are $k + 3(n-k) = 3n - 2k$.
Since these costs are exactly the difference in total persistence, the sequence is optimal.

In the case where no covering exists, we again know that $|| T_1 || - || T_2 || = 6n - (3n + 2k) = 3n - 2k$ and that $d(T_1,T_2) > 3n-2k$.
Thus, a sequence of cost exactly $3n-2k$ can only include deletions or shortening relabel operations.
Due to this property, such a sequence can only construct the $G$ and $G_i$ gadgets in $T_2$ from those in $T_2$.
Hence, it would imply a covering, leading to a contradiction.

\textit{(NP.)} The decision variant of our edit distance is clearly in \NP: given an edit sequence, we can check whether its cost is below the given threshold $c$. 
\end{proof}

\begin{figure}[]

\begin{subfigure}[b]{0.49\linewidth}
  \centering
  \resizebox{\linewidth}{!}{
  \begin{tikzpicture}[scale=0.4]
  \pgfmathsetmacro{\trianglew}{0.55*2}
  \pgfmathsetmacro{\triangleh}{1*2}

  \node[draw=none,fill=none,circle] at (-4, -1) (dummy) {};
  \node[draw=none,fill=none,circle] at (11, -1) (dummy) {};
  \node[draw=none,fill=none,circle] at (-4, 15) (dummy) {};
  \node[draw=none,fill=none,circle] at (11, 15) (dummy) {};
  
  \node[draw,circle,fill=gray!100] at (0, 0) (root_1) {};
  \node[draw,circle,fill=gray!100] at (0, 4) (s1_1) {};
  \node[draw,circle,fill=gray!100] at (-1, 5) (s2_1) {};
  \node[] at (-2, 12.33) (m1_1l) {$A$};
  \node[inner sep = 0, outer sep = 0, minimum size = 0] at (-2, 11) (m1_1) {};
  \draw[draw=black] (-2,11) -- (-2 -\trianglew,11 +\triangleh) -- (-2 + \trianglew,11 +\triangleh) -- cycle;
  \node[] at (2, 12.33) (m2_1l) {$B$};
  \node[inner sep = 0, outer sep = 0, minimum size = 0] at (2, 11) (m2_1) {};
  \draw[draw=black] (2,11) -- (2 -\trianglew,11 +\triangleh) -- (2 + \trianglew,11 +\triangleh) -- cycle;
  \node[] at (0, 10.33) (m3_1l) {$C$};
  \node[inner sep = 0, outer sep = 0, minimum size = 0] at (0, 9) (m3_1) {};
  \draw[draw=black] (0,9) -- (0 -\trianglew,9 +\triangleh) -- (0 + \trianglew,9 +\triangleh) -- cycle;

  \draw[gray,very thick] (root_1) -- (s1_1);
  \draw[gray,very thick] (s1_1) -- (m2_1);
  \draw[gray,very thick] (s1_1) -- (s2_1);
  \draw[gray,very thick] (s2_1) -- (m1_1);
  \draw[gray,very thick] (s2_1) -- (m3_1);
  
  \node[draw,circle,fill=gray!100] at (0+7, 0) (root_2) {};
  \node[draw,circle,fill=gray!100] at (0+7, 4) (s1_2) {};
  \node[draw,circle,fill=gray!100] at (1+7, 5) (s2_2) {};
  
  \node[] at (7+-2, 12.33) (m1_1l) {$A$};
  \node[inner sep = 0, outer sep = 0, minimum size = 0] at (7+-2, 11) (m1_2) {};
  \draw[draw=black] (7+-2,11) -- (7+-2 -\trianglew,11 +\triangleh) -- (7+-2 + \trianglew,11 +\triangleh) -- cycle;
  \node[] at (7+2, 12.33) (m2_1l) {$B$};
  \node[inner sep = 0, outer sep = 0, minimum size = 0] at (7+2, 11) (m2_2) {};
  \draw[draw=black] (7+2,11) -- (7+2 -\trianglew,11 +\triangleh) -- (7+2 + \trianglew,11 +\triangleh) -- cycle;
  \node[] at (7+0, 10.33) (m3_1l) {$C$};
  \node[inner sep = 0, outer sep = 0, minimum size = 0] at (7+0, 9) (m3_2) {};
  \draw[draw=black] (7+0,9) -- (7+0 -\trianglew,9 +\triangleh) -- (7+0 + \trianglew,9 +\triangleh) -- cycle;

  \draw[gray,very thick] (root_2) -- (s1_2);
  \draw[gray,very thick] (s1_2) -- (m1_2);
  \draw[gray,very thick] (s1_2) -- (s2_2);
  \draw[gray,very thick] (s2_2) -- (m2_2);
  \draw[gray,very thick] (s2_2) -- (m3_2);
  
  \end{tikzpicture}
  }
  \caption{horizontal}
  \label{fig:horizontal_instability}
\end{subfigure}
\begin{subfigure}[b]{0.49\linewidth}
  \centering
  \resizebox{0.95\linewidth}{!}{
  \begin{tikzpicture}[scale=0.4]
  \node[draw=none,fill=none,circle] at (-4, -1) (dummy) {};
  \node[draw=none,fill=none,circle] at (11, -1) (dummy) {};
  \node[draw=none,fill=none,circle] at (-4, 15) (dummy) {};
  \node[draw=none,fill=none,circle] at (11, 15) (dummy) {};
  
  \node[draw,circle,fill=gray!100] at (-2, 0) (root_1) {};
  \node[draw,circle,fill=gray!100] at (-2, 4) (s1_1) {};
  \node[draw,circle,fill=red!80, inner sep = 0, outer sep = 0, minimum width = 0.4cm] at (-2, 12) (m1_1) {x};
  \node[draw,circle,fill=red!80, inner sep = 0, outer sep = 0, minimum width = 0.4cm] at (2, 11.5) (m2_1) {y};

  \node[draw,circle,fill=gray!100] at (2, 6) (s2_1) {};
  \node[draw,circle,fill=gray!100] at (2, 7.5) (s3_1) {};
  \node[draw,circle,fill=gray!100] at (2, 9) (s4_1) {};
  \node[] at (0.5, 7) (s2'_1) {...};
  \node[] at (0.5, 8.5) (s3'_1) {...};
  \node[] at (0.5, 10) (s4'_1) {...};

  \draw[gray,very thick] (root_1) -- (s1_1);
  \draw[gray,very thick] (s1_1) -- (m1_1);
  \draw[gray,very thick] (s2_1) -- (s2'_1);
  \draw[gray,very thick] (s1_1) .. controls (2,4) .. (s2_1);
  \draw[gray,very thick] (s2_1) -- (s3_1);
  \draw[gray,very thick] (s3_1) -- (s4_1);
  \draw[gray,very thick] (s4_1) -- (m2_1);
  \draw[gray,very thick] (s3_1) -- (s3'_1);
  \draw[gray,very thick] (s4_1) -- (s4'_1);

  \node[draw,circle,fill=gray!100] at (-2+7, 0) (root_2) {};
  \node[draw,circle,fill=gray!100] at (-2+7, 4) (s1_2) {};
  \node[draw,circle,fill=red!80, inner sep = 0, outer sep = 0, minimum width = 0.4cm] at (2+7, 11.5) (m1_2) {x};
  \node[draw,circle,fill=red!80, inner sep = 0, outer sep = 0, minimum width = 0.4cm] at (-2+7, 12) (m2_2) {y};

  \node[draw,circle,fill=gray!100] at (-2+7, 6) (s2_2) {};
  \node[draw,circle,fill=gray!100] at (-2+7, 7.5) (s3_2) {};
  \node[draw,circle,fill=gray!100] at (-2+7, 9) (s4_2) {};
  \node[] at (-0.5+7, 7) (s2'_2) {...};
  \node[] at (-0.5+7, 8.5) (s3'_2) {...};
  \node[] at (-0.5+7, 10) (s4'_2) {...};

  \draw[gray,very thick] (root_2) -- (s1_2);
  \draw[gray,very thick] (s1_2) .. controls (2+7,4) .. (2+7,6);
  \draw[gray,very thick] (2+7,6) -- (m1_2);
  \draw[gray,very thick] (s1_2) -- (s2_2);
  \draw[gray,very thick] (s2_2) -- (s3_2);
  \draw[gray,very thick] (s3_2) -- (s4_2);
  \draw[gray,very thick] (s4_2) -- (m2_2);
  \draw[gray,very thick] (s2_2) -- (s2'_2);
  \draw[gray,very thick] (s3_2) -- (s3'_2);
  \draw[gray,very thick] (s4_2) -- (s4'_2);
  
  
  
  
  \end{tikzpicture}
  }
  \caption{vertical}
  \label{fig:vertical_instability}
\end{subfigure}
\caption{The two types of instabilities: In (a) the order of saddle points changes such that feature $C$ emerges from $B$ instead of $A$. Thus, we also call this a saddle swap. In (b) the two features $x$ and $y$ swap their position in the persistence-based ordering. Thus, mappings adhering to this hierarchy cannot map $y$ to $y$ and, more importantly, not map the subtrees on the path to $y$ to each other.}
\label{fig:instability}
\end{figure}
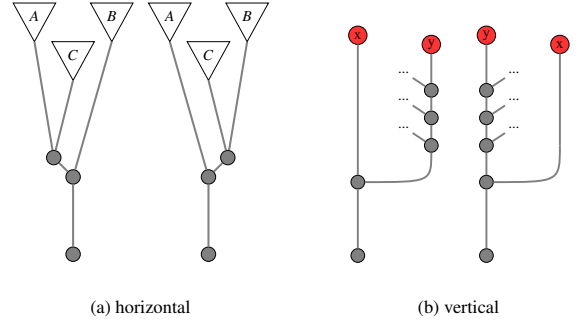

\section{Stability}
\label{sec:discussion}

In this section, we discuss how the unconstrained edit distance overcomes typical instabilities that other methods cannot handle.
We first recap the concepts of vertical and horizontal instability and identify the horizontal ones as the core problem.
We then discuss, on a conceptual level, why going from constrained to unconstrained edit distance precisely solves the problem of horizontal instability.
This gives an intuition on why our distance is not impeded by them.

\subsection*{Vertical and Horizontal Instability}

The quality of topological descriptors and distances on them is often measured by formal stability properties: do small perturbations in the scalar field only induce small distances between the abstractions?
For most descriptors, this property of course primarily depends on the chosen distance.

It turns out that many distances between merge trees suffer from instability.
Saikia et al.~\cite{DBLP:journals/cgf/SaikiaSW14} classified two very prominent types of instability in branch decompositions of merge trees: horizontal and vertical instability.
Indeed, vertical and horizontal instabilities have been identified as a main problem for edit distances between merge trees in several other works~\cite{DBLP:journals/cgf/SaikiaSW14,DBLP:journals/tvcg/SridharamurthyM20,wetzels2022branch}.

Figure~\ref{fig:instability} illustrates both types.
It can be seen that vertical instability is actually introduced by branch decompositions: it only appears when considering the persistence hierarchy of branches.
Accordingly, vertical instability does neither effect the constrained nor the unconstrained deformation-based edit distance.
This was also observed previously: vertical instabilities can be handled by branch decomposition-independent distances~\cite{wetzels2022branch,wetzels2022path,BeketayevYMWH14}, although inducing significant, but still polynomial computational overhead.

Horizontal instability appears not only in branch decompositions, but in merge trees in general.
It stems from the phenomenon described in Figure~\ref{fig:horizontal_instability}, which is often called a saddle swap.
In contrast to vertical instabilities, saddle-swaps remain a problem in basically all state-of-the-art approaches with efficient implementations.
Here, the constrained and unconstrained distances differ, which we will discuss next.

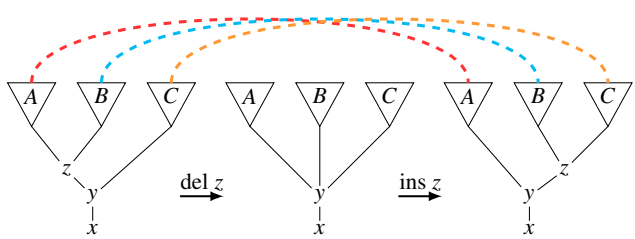
\begin{figure}
  \resizebox{\columnwidth}{!}{
  \begin{tikzpicture}[scale=0.6,every node/.style={minimum size=0.15cm,inner sep=1},yscale=-1]
    \pgfmathsetmacro{\trianglew}{0.55}
    \pgfmathsetmacro{\triangleh}{1}

    \begin{scope}[shift={(0,0)}]
    \node at (0,0.75) (a) {$x$};
    \node at (0,0) (b) {$y$};
    \node at (-0.6,-0.6) (c) {$z$};

    \node[inner sep = 0,outer sep = 0,minimum size = 0] at (-0.6-0.8,-0.6-1) (a1) {};
    \node[inner sep = 0,outer sep = 0,minimum size = 0] at (-0.6+0.8,-0.6-1) (a2) {};
    \node[inner sep = 0,outer sep = 0,minimum size = 0] at (-0.6+2.4,-0.6-1)  (a3) {};

    \draw[] (a) -- (b);
    \draw[] (b) -- (c);
    \draw[] (b) -- (a3);
    \draw[] (c) -- (a1);
    \draw[] (c) -- (a2);

    \draw[draw=black] (-1.4,-1.6) -- (-1.4 -\trianglew,-1.6 -\triangleh) -- (-1.4 + \trianglew,-1.6 -\triangleh) -- cycle;
    \node[draw=white!0,fill=white!0]  at (-1.4,-1.6-0.7) (g) {$A$};

    \draw[draw=black] (0.2,-1.6) -- (0.2 -\trianglew,-1.6 -\triangleh) -- (0.2 + \trianglew,-1.6 -\triangleh) -- cycle;
    \node[draw=white!0,fill=white!0]  at (0.2,-1.6-0.7) (g) {$B$};

    \draw[draw=black] (1.8,-1.6) -- (1.8 -\trianglew,-1.6 -\triangleh) -- (1.8 + \trianglew,-1.6 -\triangleh) -- cycle;
    \node[draw=white!0,fill=white!0]  at (1.8,-1.6-0.7) (g) {$C$};
    \end{scope}
    \begin{scope}[shift={(5,0)}]
      \node at (0.2,0.75) (a) {$x$};
      \node at (0.2,0) (b) {$y$};

      \node[inner sep = 0,outer sep = 0,minimum size = 0] at (-0.6-0.8,-0.6-1) (a1) {};
      \node[inner sep = 0,outer sep = 0,minimum size = 0] at (-0.6+0.8,-0.6-1) (a2) {};
      \node[inner sep = 0,outer sep = 0,minimum size = 0] at (-0.6+2.4,-0.6-1)  (a3) {};
  
      \draw[] (a) -- (b);
      \draw[] (b) -- (a1);
      \draw[] (b) -- (a2);
      \draw[] (b) -- (a3);
  
      \draw[draw=black] (-1.4,-1.6) -- (-1.4 -\trianglew,-1.6 -\triangleh) -- (-1.4 + \trianglew,-1.6 -\triangleh) -- cycle;
      \node[draw=white!0,fill=white!0]  at (-1.4,-1.6-0.7) (g) {$A$};
  
      \draw[draw=black] (0.2,-1.6) -- (0.2 -\trianglew,-1.6 -\triangleh) -- (0.2 + \trianglew,-1.6 -\triangleh) -- cycle;
      \node[draw=white!0,fill=white!0]  at (0.2,-1.6-0.7) (g) {$B$};
  
      \draw[draw=black] (1.8,-1.6) -- (1.8 -\trianglew,-1.6 -\triangleh) -- (1.8 + \trianglew,-1.6 -\triangleh) -- cycle;
      \node[draw=white!0,fill=white!0]  at (1.8,-1.6-0.7) (g) {$C$};
      \end{scope}
      \begin{scope}[shift={(10,0)}]
        \node at (0,0.75) (a) {$x$};
        \node at (0,0) (b) {$y$};
        \node at (0.8,-0.6) (c) {$z$};

        \node[inner sep = 0,outer sep = 0,minimum size = 0] at (-0.6-0.8,-0.6-1) (a1) {};
        \node[inner sep = 0,outer sep = 0,minimum size = 0] at (-0.6+0.8,-0.6-1) (a2) {};
        \node[inner sep = 0,outer sep = 0,minimum size = 0] at (-0.6+2.4,-0.6-1)  (a3) {};
    
        \draw[] (a) -- (b);
        \draw[] (b) -- (c);
        \draw[] (b) -- (a1);
        \draw[] (c) -- (a3);
        \draw[] (c) -- (a2);
    
        \draw[draw=black] (-1.4,-1.6) -- (-1.4 -\trianglew,-1.6 -\triangleh) -- (-1.4 + \trianglew,-1.6 -\triangleh) -- cycle;
        \node[draw=white!0,fill=white!0]  at (-1.4,-1.6-0.7) (g) {$A$};
    
        \draw[draw=black] (0.2,-1.6) -- (0.2 -\trianglew,-1.6 -\triangleh) -- (0.2 + \trianglew,-1.6 -\triangleh) -- cycle;
        \node[draw=white!0,fill=white!0]  at (0.2,-1.6-0.7) (g) {$B$};
    
        \draw[draw=black] (1.8,-1.6) -- (1.8 -\trianglew,-1.6 -\triangleh) -- (1.8 + \trianglew,-1.6 -\triangleh) -- cycle;
        \node[draw=white!0,fill=white!0]  at (1.8,-1.6-0.7) (g) {$C$};
        \end{scope}

        \draw[very thick,dashed,looseness=0.5,draw=red!80] (-1.4,-1.6 -\triangleh) to[out=270,in=270] (10+-1.4,-1.6 -\triangleh);
        \draw[very thick,dashed,looseness=0.5,draw=cyan!80] (0.2,-1.6 -\triangleh) to[out=270,in=270] (10+0.2,-1.6 -\triangleh);
        \draw[very thick,dashed,looseness=0.5,draw=orange!80] (1.8,-1.6 -\triangleh) to[out=270,in=270] (10+1.8,-1.6 -\triangleh);

        \draw[thick,-latex] (2.5-0.5,0) to (2.5+0.5,0);
        \node[] at (2.5,-0.33) {del $z$};

        \draw[thick,-latex] (7.5-0.5,0) to (7.5+0.5,0);
        \node[] at (7.5,-0.33) {ins $z$};

  \end{tikzpicture}
  }

  \vspace{0.5cm}
  
  \resizebox{\columnwidth}{!}{
  \begin{tikzpicture}[scale=0.6,every node/.style={minimum size=0.15cm,inner sep=1},yscale=-1]
    \pgfmathsetmacro{\trianglew}{0.55}
    \pgfmathsetmacro{\triangleh}{1}

    \begin{scope}[shift={(0,0)}]
    \node at (0,0.75) (a) {$x$};
    \node at (0,0) (b) {$y$};
    \node at (-0.6,-0.6) (c) {$z$};

    \node[inner sep = 0,outer sep = 0,minimum size = 0] at (-0.6-0.8,-0.6-1) (a1) {};
    \node[inner sep = 0,outer sep = 0,minimum size = 0] at (-0.6+0.8,-0.6-1) (a2) {};
    \node[inner sep = 0,outer sep = 0,minimum size = 0] at (-0.6+2.4,-0.6-1)  (a3) {};

    \draw[] (a) -- (b);
    \draw[] (b) -- (c);
    \draw[] (b) -- (a3);
    \draw[] (c) -- (a1);
    \draw[] (c) -- (a2);

    \draw[draw=black] (-1.4,-1.6) -- (-1.4 -\trianglew,-1.6 -\triangleh) -- (-1.4 + \trianglew,-1.6 -\triangleh) -- cycle;
    \node[draw=white!0,fill=white!0]  at (-1.4,-1.6-0.7) (g) {$A$};

    \draw[draw=black] (0.2,-1.6) -- (0.2 -\trianglew,-1.6 -\triangleh) -- (0.2 + \trianglew,-1.6 -\triangleh) -- cycle;
    \node[draw=white!0,fill=white!0]  at (0.2,-1.6-0.7) (g) {$B$};

    \draw[draw=black] (1.8,-1.6) -- (1.8 -\trianglew,-1.6 -\triangleh) -- (1.8 + \trianglew,-1.6 -\triangleh) -- cycle;
    \node[draw=white!0,fill=white!0]  at (1.8,-1.6-0.7) (g) {$C$};
    \end{scope}
    \begin{scope}[shift={(5,0)}]
      \node at (0.2,0.75) (a) {$x$};
      \node at (0.2,0) (b) {$y$};

      \node[inner sep = 0,outer sep = 0,minimum size = 0] at (-0.6,-0.2-1) (a1) {};
      \node[inner sep = 0,outer sep = 0,minimum size = 0] at (0.8,-0.2-1)  (a3) {};
  
      \draw[] (a) -- (b);
      \draw[] (b) -- (a1);
      \draw[] (b) -- (a3);
  
      \draw[draw=black] (-0.6,-1.2) -- (-0.6-\trianglew,-1.2 -\triangleh) -- (-0.6 + \trianglew,-1.2 -\triangleh) -- cycle;
      \node[draw=white!0,fill=white!0]  at (-0.6,-1.2-0.7) (g) {$A$};

      \draw[draw=black] (0.8,-1.2) -- (0.8 -\trianglew,-1.2 -\triangleh) -- (0.8 + \trianglew,-1.2 -\triangleh) -- cycle;
      \node[draw=white!0,fill=white!0]  at (0.8,-1.2-0.7) (g) {$C$};
      \end{scope}
      \begin{scope}[shift={(10,0)}]
        \node at (0,0.75) (a) {$x$};
        \node at (0,0) (b) {$y$};
        \node at (0.8,-0.6) (c) {$z$};

        \node[inner sep = 0,outer sep = 0,minimum size = 0] at (-0.6-0.8,-0.6-1) (a1) {};
        \node[inner sep = 0,outer sep = 0,minimum size = 0] at (-0.6+0.8,-0.6-1) (a2) {};
        \node[inner sep = 0,outer sep = 0,minimum size = 0] at (-0.6+2.4,-0.6-1)  (a3) {};
    
        \draw[] (a) -- (b);
        \draw[] (b) -- (c);
        \draw[] (b) -- (a1);
        \draw[] (c) -- (a3);
        \draw[] (c) -- (a2);
    
        \draw[draw=black] (-1.4,-1.6) -- (-1.4 -\trianglew,-1.6 -\triangleh) -- (-1.4 + \trianglew,-1.6 -\triangleh) -- cycle;
        \node[draw=white!0,fill=white!0]  at (-1.4,-1.6-0.7) (g) {$A$};
    
        \draw[draw=black] (0.2,-1.6) -- (0.2 -\trianglew,-1.6 -\triangleh) -- (0.2 + \trianglew,-1.6 -\triangleh) -- cycle;
        \node[draw=white!0,fill=white!0]  at (0.2,-1.6-0.7) (g) {$B$};
    
        \draw[draw=black] (1.8,-1.6) -- (1.8 -\trianglew,-1.6 -\triangleh) -- (1.8 + \trianglew,-1.6 -\triangleh) -- cycle;
        \node[draw=white!0,fill=white!0]  at (1.8,-1.6-0.7) (g) {$C$};
        \end{scope}

        \draw[very thick,dashed,looseness=0.5,draw=red!80] (-1.4,-1.6 -\triangleh) to[out=270,in=270] (10+-1.4,-1.6 -\triangleh);
        \draw[very thick,dashed,looseness=0.5,draw=orange!80] (1.8,-1.6 -\triangleh) to[out=270,in=270] (10+1.8,-1.6 -\triangleh);

        \draw[thick,-latex] (2.5-0.5,0) to (2.5+0.5,0);
        \node[] at (2.5,-0.33) {del $B, z$};

        \draw[thick,-latex] (7.5-0.5,0) to (7.5+0.5,0);
        \node[] at (7.5,-0.33) {ins $z, B$};
  \end{tikzpicture}
  }
  \caption{Saddle swaps in edit distances: deleting the node $z$ and then re-inserting it with different children makes it possible to move the feature $B$ horizontally from the subtree of feature $A$ to the subtree of feature $C$. If we do not allow the deletion of inner nodes, we need to delete $B$ completely and insert it again to arrive at the second tree. Thus, $B$ is not present in the mapping in the latter case.}
  \label{fig:saddle_swap}
\end{figure}

\subsection*{Saddle Swaps in Edit Distances}

In the case of tree edit distances, saddle swaps as described in Figure~\ref{fig:horizontal_instability} can actually be expressed as edit sequences.
Figure~\ref{fig:saddle_swap} illustrates how this can be done.
The cost of such a sequence then describes the difference between the two trees as we would expect intuitively, while keeping a semantically meaningful matching (not inhibited by constraints of the distances) between the features.
Figure~\ref{fig:saddle_swap} also shows why constrained edit sequences cannot achieve the same:
to perform saddle swaps in edit sequences, one needs to first delete an inner node or edge followed by re-inserting it.
In fact, there is no other way to move a subtree horizontally in the tree hierarchy.

Thus, saddle swaps exactly correspond to the type of changes only handled by edit distances which allow such operations on inner nodes.
If we allow these operations without any restrictions, then the distance becomes \NP-hard.
Whenever such operations on inner nodes are possible, it seems that we can produce a reduction in the vein of the one given in Section~\ref{sec:hardness}.

Nonetheless, the unconstrained deformation-based edit distance can therefore handle saddle swaps in a general way.
That this leads to very stable behavior.
In the next section, we validate this property experimentally, both on synthetic and real-world datasets.

\begin{figure}[]
  \centering
  \includegraphics[height=2.5cm]{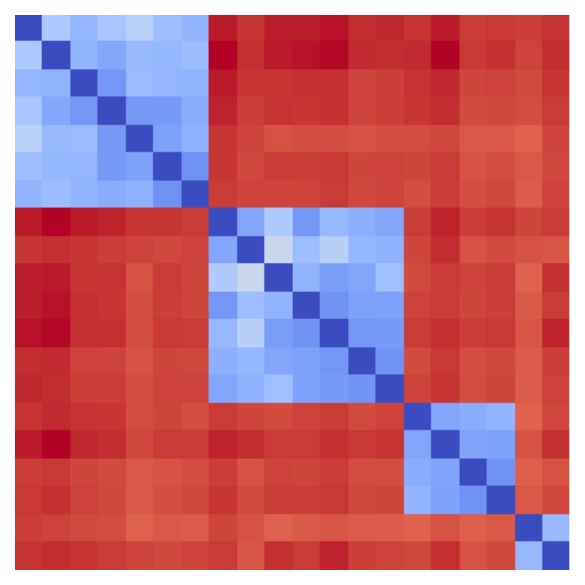}
  \hspace{0.001\linewidth}
  \includegraphics[height=2.5cm]{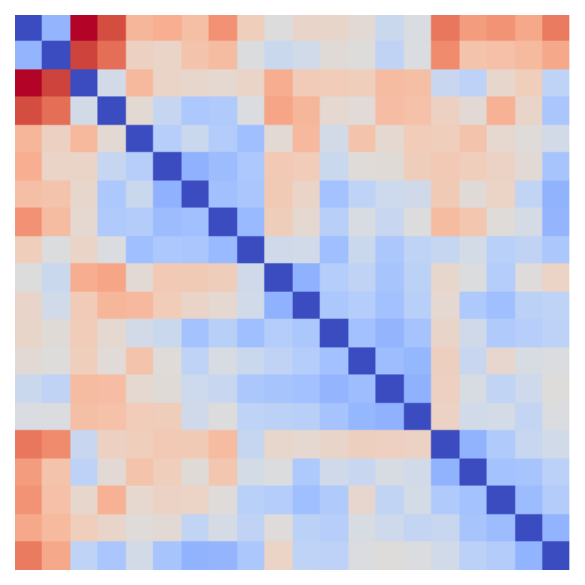}
  \hspace{0.001\linewidth}
  \includegraphics[height=2.5cm]{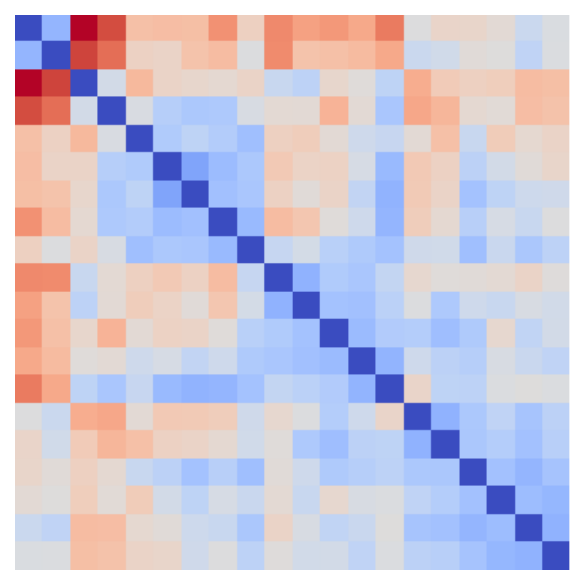}
  \hspace{0.001\linewidth}
  \includegraphics[height=2.5cm]{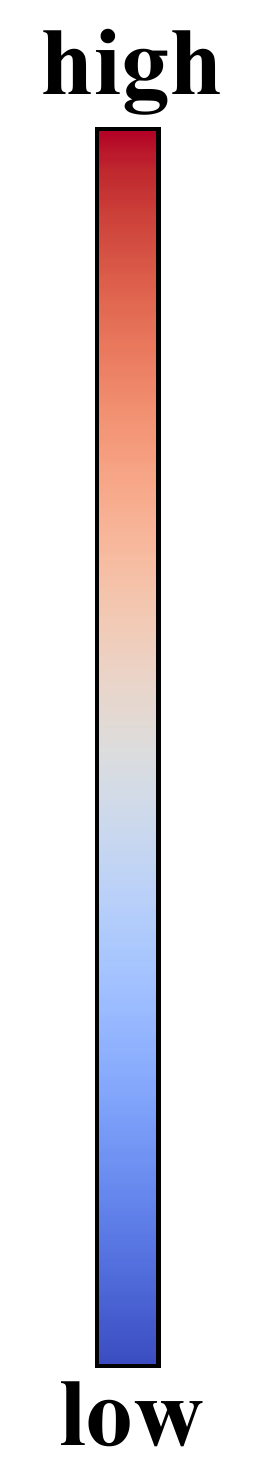}
  \caption{Distance matrices for the merge tree edit distances, the path mapping distance and the unconstrained deformation-based edit distance on the analytical example for vertical instabilities.}
  \label{fig:matrices_analytic_vertical}
\end{figure}

\begin{figure}[]
  \centering
  \includegraphics[height=2.5cm]{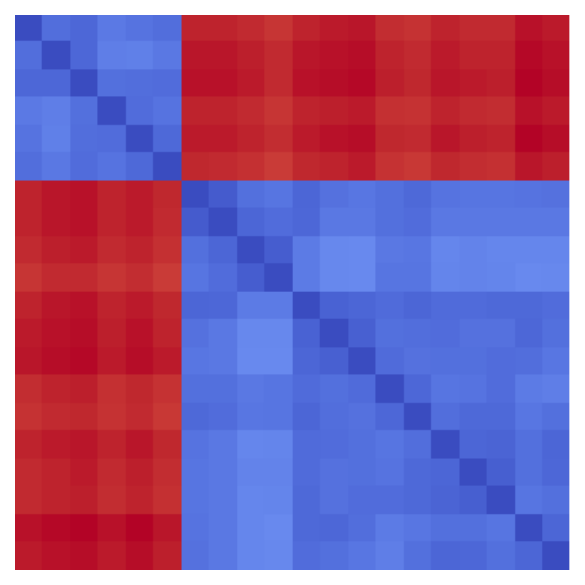}
  \hspace{0.001\linewidth}
  \includegraphics[height=2.5cm]{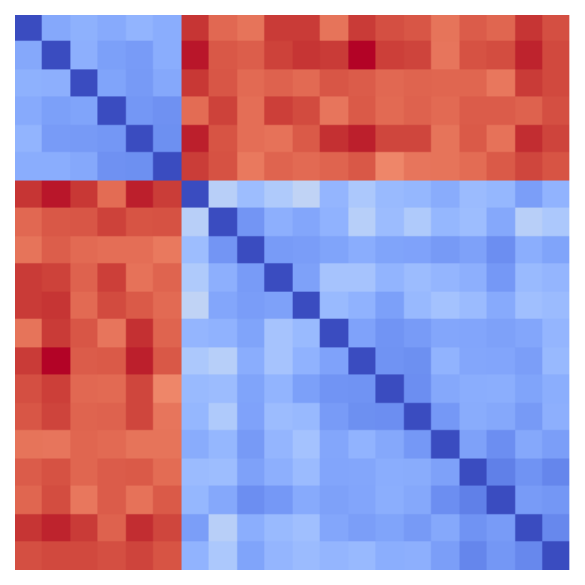}
  \hspace{0.001\linewidth}
  \includegraphics[height=2.5cm]{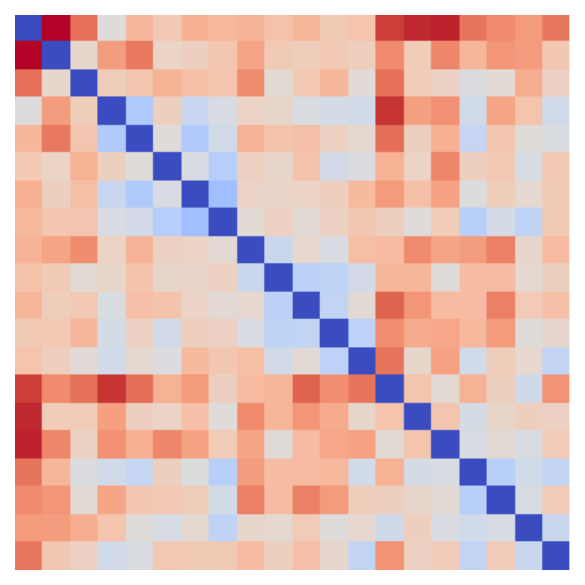}
  \hspace{0.001\linewidth}
  \includegraphics[height=2.5cm]{figures/cbar2.pdf}
  \caption{Distance matrices for the merge tree edit distances, the path mapping distance and the unconstrained deformation-based edit distance on the analytical example for horizontal instabilities.}
  \label{fig:matrices_analytic_horizontal}
\end{figure}

\begin{figure*}[]
  \centering
  \includegraphics[height=3.9cm]{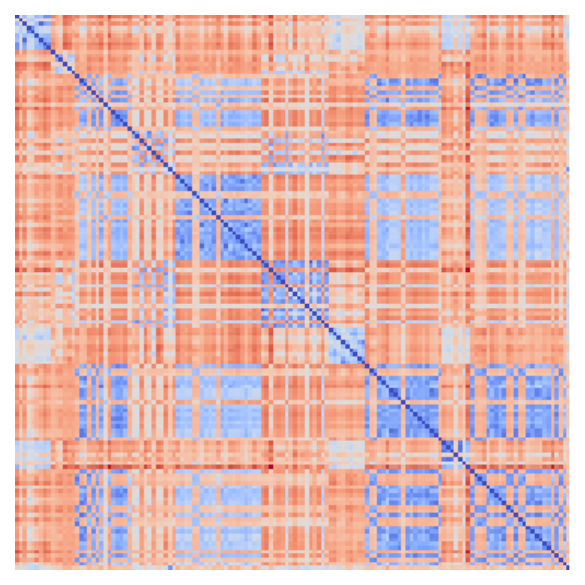}
  \hspace{0.01\linewidth}
  \includegraphics[height=3.9cm]{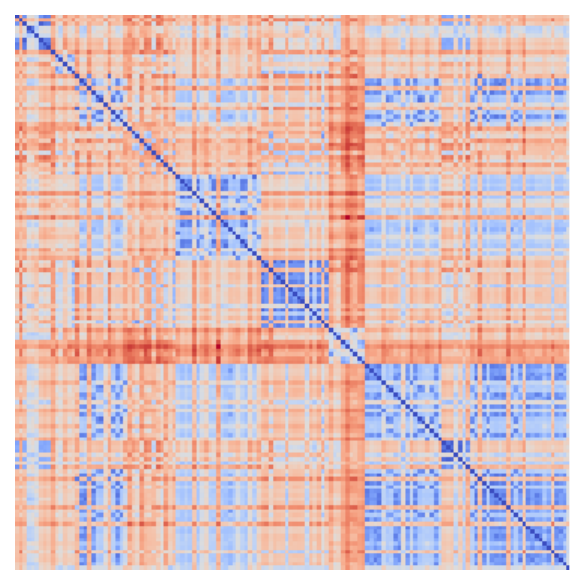}
  \hspace{0.01\linewidth}
  \includegraphics[height=3.9cm]{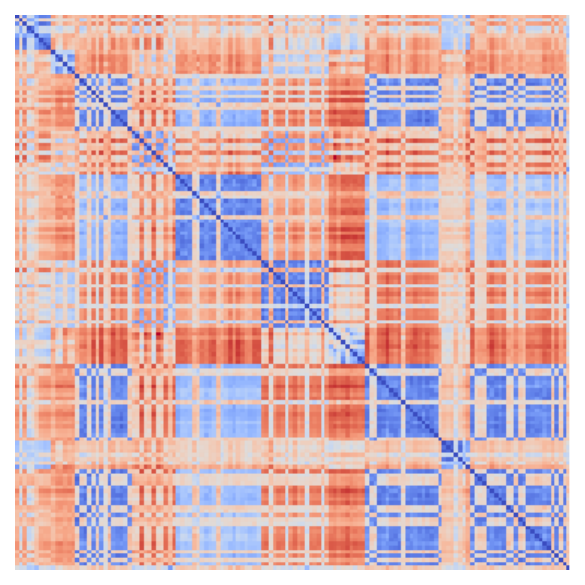}
  \hspace{0.01\linewidth}
  \includegraphics[height=3.9cm]{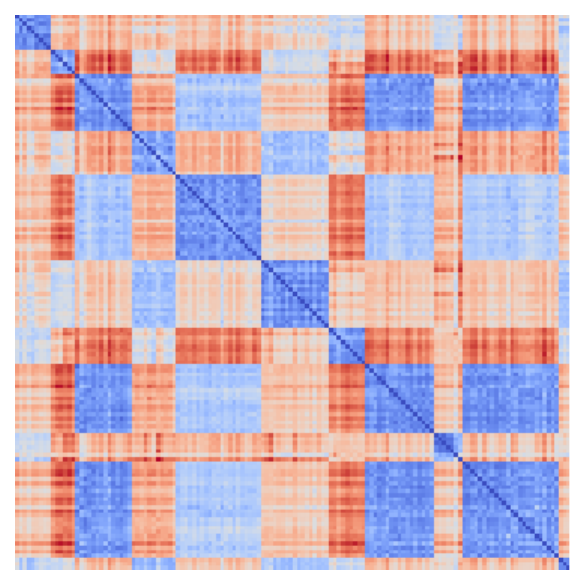}
  \hspace{0.01\linewidth}
  \includegraphics[height=3.9cm]{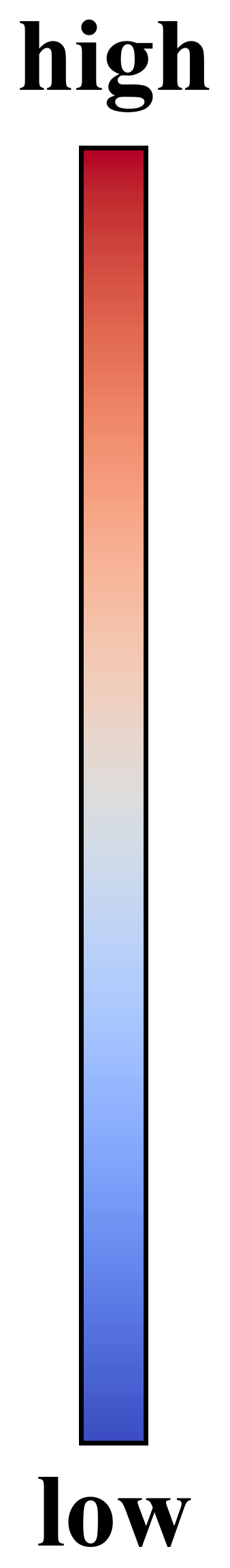}
  \caption{Four distance matrices of the TOSCA ensemble visualized as heatmaps. They were computed (from left to right) using the merge tree edit distance, the merge tree Wasserstein distance, the path mapping distance and the unconstrained deformation-based edit distance. No $\epsilon$-preprocessing was applied. From top left to bottom right, the clusters correspond with the following shapes: ``cat'', ``centaur'', ``david'', ``dog'', ``gorilla'', ``horse'', ``lioness'', ``michael'', ``seahorse''/``shark'', ``victoria'' and ``wolf''. Within all clusters the distances are generally small for the unconstrained distance (e.g.\ they range from 0.16 to 0.56 in the ``michael'' cluster), whereas they contain a lot of outliers/noise for the constrained distances (e.g.\ path mapping distances within ``michael'' range from 0.17 to 1.47). Also, the overall highest unconstrained and constrained deformation-based distances behave as expected, with the unconstrained one being significantly smaller (1.89 bs 2.34).}
  \label{fig:matrices_reduced}
\end{figure*}

\section{Experiments}
\label{sec:experiments}

In this section, we illustrate the stability of the unconstrained deformation-based edit distance experimentally.
We begin with experiments on two analytical datasets.
One is designed to specifically provoke vertical instabilities, while the other provokes horizontal instabilities.
On these datasets, we showcase that our distance measure can handle both types of instabilities in an isolated scenario.

Then, we consider the well-known TOSCA shape matching ensemble.
Indeed, these scalar fields contain both vertical and horizontal instabilities. 
An illustration of horizontal instabilities in this dataset can be found in Figure~\ref{fig:teaser}.
The unconstrained deformation-based edit distance significantly outperforms all other methods and yields almost perfect clusters, matching the various shapes of the ensemble onto each other.
This demonstrates that the distance can handle both types of instabilities at the same time, as well as that its advantages over existing approaches indeed matter in real-world (though non-scientific) datasets.

All topological preprocessing including simplification and merge tree computation was done in TTK~\cite{DBLP:journals/tvcg/TiernyFLGM18}, while the screenshots were rendered using in ParaView~\cite{paraviewBook} and VTK.

\subsection*{Analytical Example for Vertical Instability}
To test for vertical stability, we use the analytical example that was introduced in~\cite{wetzels2022branch} to illustrate the advantages of branch decomposition-independent distances for merge trees.
It is an ensemble consisting of 20 two-dimensional scalar fields, which exhibits strong vertical instability in its split trees.
All members consist of four main peaks, i.e.\ four main features in their merge tree.
One of them has five smaller maxima emerging from it.
Each member was constructed from the same baseline field with small perturbations applied to positions and heights of the peaks.
A stable distance should therefore also only express small noise-like distances.

However, this is only true for vertically stable distances.
The persistence hierarchy of the features differs between the members: sometimes the outlier peak is the highest maximum, sometimes it is not.
Edit mappings that have to adhere to this hierarchy therefore identify clusters, depending on where the outlier feature is placed in the branch decomposition.
In contrast, a perfect mapping matches the outlier features onto each other as well as the three others.
It therefore only conveys the original noise.

Figure~\ref{fig:matrices_analytic_vertical} shows the distance matrices of three different distances on this dataset.
Here, we first computed all distances and then applied a clustermap to highlight clustered results.
It can be observed that the merge tree edit distance yields a clustered distance matrix while the two deformation-based distances show the ``correct'' noise-like distances.
These observations concur with the theoretic properties of  the distances.
While the merge tree edit distance is branch decomposition-based and shows false clusters, the path mapping distance was designed to overcome exactly this problem.
As expected, the unconstrained deformation based behaves almost identical to the path mapping distance, since perfect matchings on this dataset do not require any saddle swaps.

\subsection*{Analytical Example for Horizontal Instability}

Our second experiment is performed on another analytical ensemble, which we designed specifically for this paper to test for horizontal stability.
The overall idea is similar to the first experiment.
The 20 two-dimensional scalar fields consist of three distinct features (again they differ in the amount of emerging side peaks), and due to small perturbations the nesting of these varies between the members.
The smallest of the main peaks first branches from one of the two other peaks.
The order differs within the members of the ensemble.
This leads to saddle swaps being necessary to match the three features correctly.
Since we constructed this dataset with a strict ordering in height of the features, it does not contain any vertical instability.

The test for stability is then analogous to the first example: if the distance exhibits clusters, it is horizontally unstable.
Again, the results presented in Figure~\ref{fig:matrices_analytic_horizontal} match the expected outcome.
The merge tree edit distance is unstable, as is the path mapping distance, unlike on the first dataset (which demonstrates that it only solves the problem of vertical instability).
The unconstrained deformation-based distance again represents the noise in the data correctly.

\begin{figure}[h]
  \centering
  \includegraphics[width=0.85\linewidth]{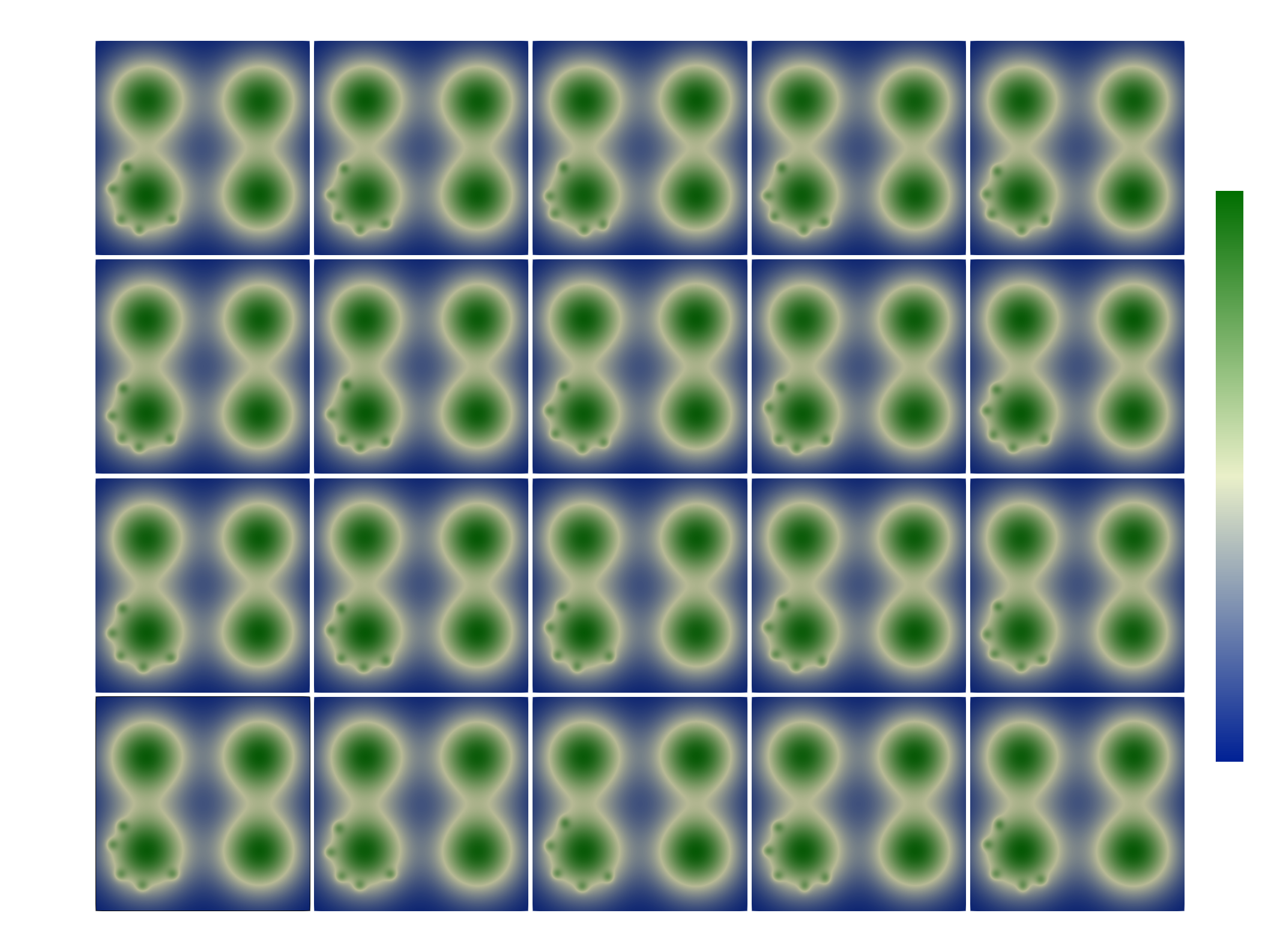}
  \includegraphics[width=0.85\linewidth]{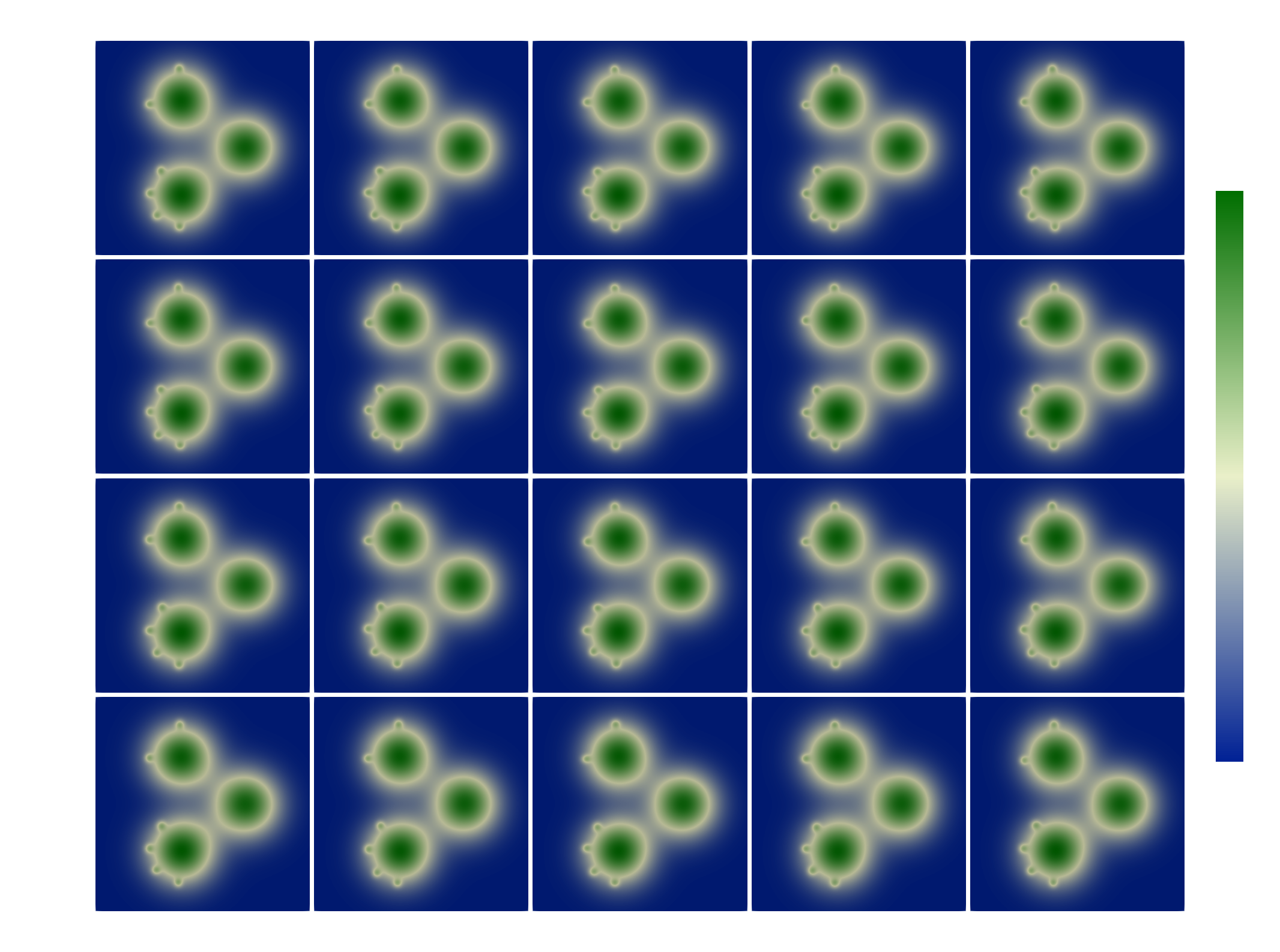}
  \caption{The two analytical ensembles intuitively only show noise-like differences. However, the first one exhibits vertical instabilities: which of the four peaks is the most persistent one differs between the members. The second one exhibits horizontal instabilities: the members differ in which of the two peaks on the left first merges with the peak on the right.}
  \label{fig:ensembles_analytical}
\end{figure}

\subsection*{TOSCA}

Our main experiment was performed on the TOSCA non-rigid world dataset~\cite{DBLP:series/mcs/BronsteinBK09}.
It is a shape matching ensemble consisting of several human and animal shapes in varying poses.
The surface meshes have a scalar field attached representing the average geodesic distance of the vertices~\cite{HilagaSKK01}.
We use the same field that was also used by Sridharamurthy et al. in~\cite{DBLP:journals/tvcg/SridharamurthyM20}.
Figure~\ref{fig:shapes_tosca} shows four example members from four different shapes.
A good distance should identify the different shapes correctly as clusters.
Furthermore, similar shapes (e.g.\ two different human shapes, like ``victoria'' and ``david'' in Figure~\ref{fig:shapes_tosca}) should also yield small distances.

We applied topological simplification (using TTK~\cite{DBLP:journals/tvcg/TiernyFLGM18}) with a relative threshold of 6\% to remove noise and reduce the size of the trees.
Then, we computed the split trees with TTK.
As a last preprocessing step, we removed all merge trees with more than 26 vertices to get feasible computation times.
This mainly removed members from the ``lioness'' shape, which seems to be an outlier cluster in terms of size or noise of the split trees.

On this reduced ensemble, we computed the full distance matrix using four different edit distances: the merge tree edit distance by Sridharamurthy et al.~\cite{DBLP:journals/tvcg/SridharamurthyM20}, the merge tree Wasserstein distance by Pont et al.~\cite{DBLP:journals/tvcg/PontVDT22}, the path mapping distance by Wetzels et al.~\cite{wetzels2022path} and the new unconstrained deformation-based edit distance.
The three existing distances are all implemented in TTK, which we used to derive the matrices.
The results can be seen in Figure~\ref{fig:matrices_reduced}.
Clearly, the unconstrained distance yields the cleanest results with perfectly identified clusters corresponding to the various human or animal shapes (the entries are ordered by shape first).
All other distances heavily suffer from noise in the distances, which overshadows the clusters present in the data.
Furthermore, it can be observed that the clusters are also visually more distinct when using the path mapping distance in comparison to the two branch decomposition-based approaches.
This implies that the ensemble not only exhibits horizontal instabilities, but also vertical ones, which are both handled well by the unconstrained edit distance.

\begin{figure}[h]
  \begin{subfigure}{\linewidth}
  \centering
  \includegraphics[height=2.5cm]{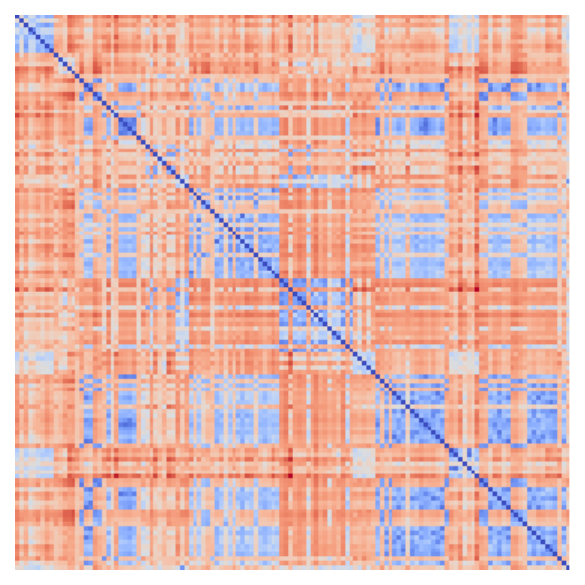}
  \hspace{0.001\linewidth}
  \includegraphics[height=2.5cm]{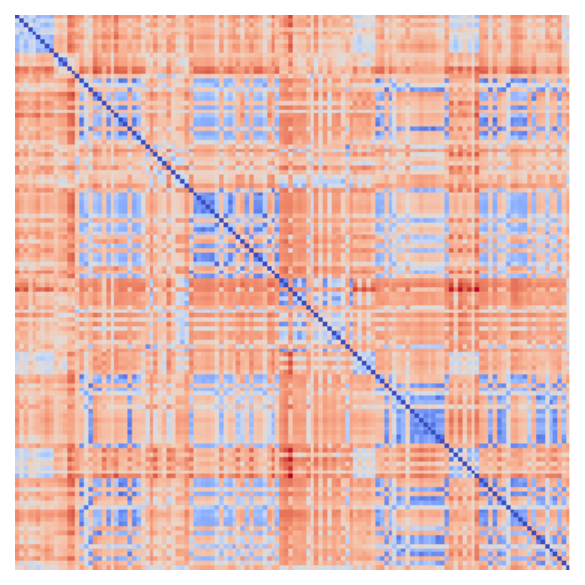}
  \hspace{0.001\linewidth}
  \includegraphics[height=2.5cm]{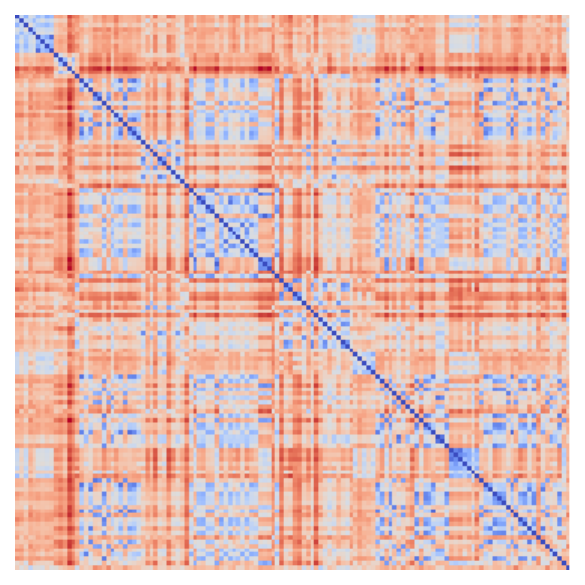}
  \hspace{0.001\linewidth}
  \includegraphics[height=2.5cm]{figures/cbar2.pdf}
  \caption{Distance matrices for the merge tree edit distance with preprocessing.}
  \label{fig:matrices_reduced_eps_mted}
\end{subfigure}
\begin{subfigure}{\linewidth}
  \centering
  \includegraphics[height=2.5cm]{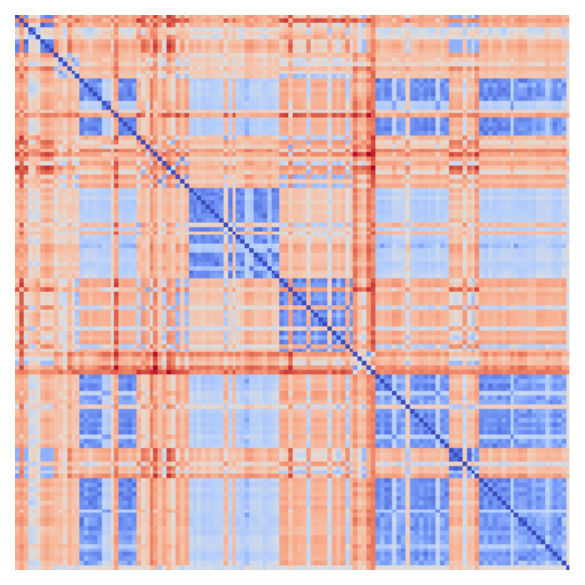}
  \hspace{0.001\linewidth}
  \includegraphics[height=2.5cm]{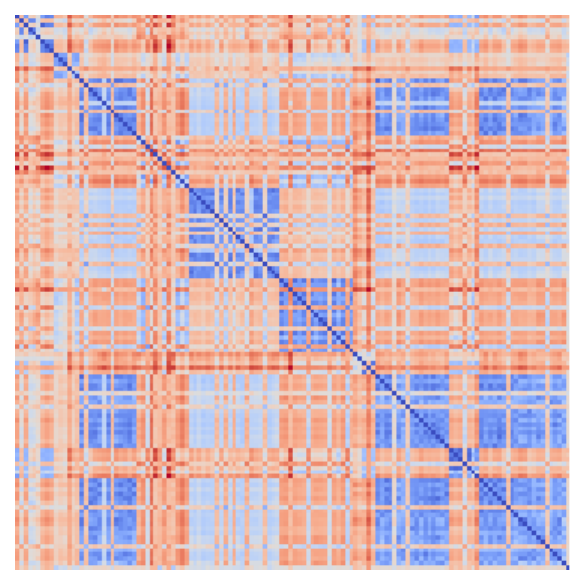}
  \hspace{0.001\linewidth}
  \includegraphics[height=2.5cm]{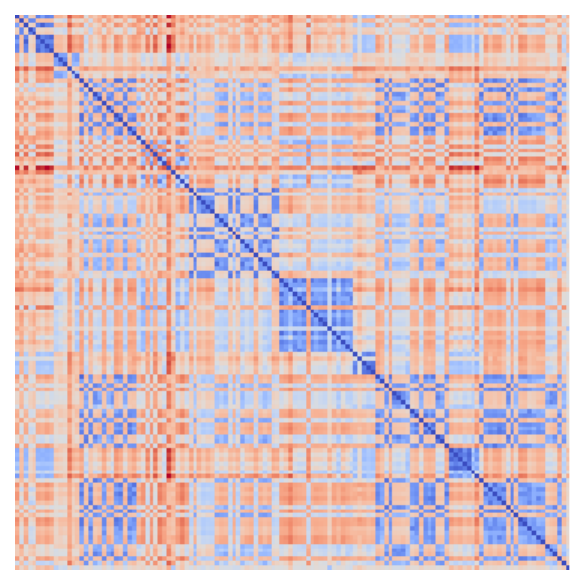}
  \hspace{0.001\linewidth}
  \includegraphics[height=2.5cm]{figures/cbar2.pdf}
  \caption{Distance matrices for the merge tree Wasserstein distance with preprocessing.}
  \label{fig:matrices_reduced_eps_ws}
\end{subfigure}
\caption{Distance matrices for the merge tree edit distance (a) and the merge tree Wasserstein distance (b) with $\epsilon$-preprocessing. The $\epsilon$ values were (left to right) 2\%, 5\% and 10\% of the scalar range.}
\label{fig:matrices_reduced_eps}
\end{figure}

\subsection*{Effect of Preprocessing}
To compare the new distance with the similar method of $\epsilon$-preprocessing, we also computed the distance matrices of the merge tree edit distance and the merge tree Wasserstein distance for different values of $\epsilon$.
As argued in Section~\ref{sec:discussion}, both approaches try to fix the same problem, but unconstrained edit distances do so in a more general way.

Our experiments show that applying the preprocessing with small threshold values (under 10\% of the scalar range) still leads to cluttered distance matrices and noisy clusters.
The different matrices can be seen in Figure~\ref{fig:matrices_reduced_eps}.
Another interesting observation is that for both distances, the results get worse with a threshold of 10\% in comparison to 5\%.
Thus, we conclude that the preprocessing exhibits a less predictable behavior than the more generic solution of using unconstrained distances.

\begin{figure}[]
  \centering
  \includegraphics[width=0.75\linewidth]{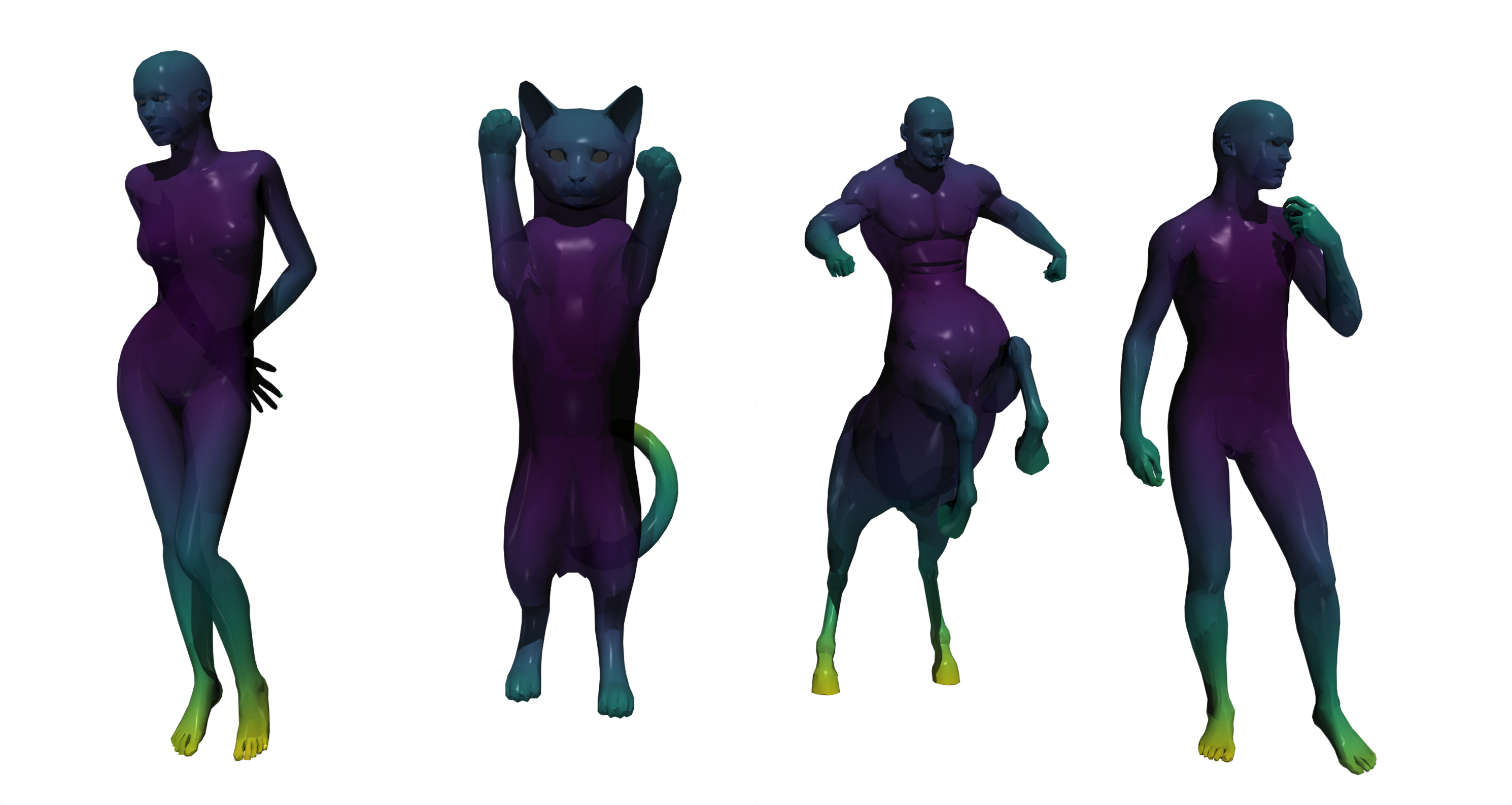}
  \caption{Four different shapes from the TOSCA ensemble: ``victoria'', ``cat'', ``centaur'' and ``david'' (left to right).}
  \label{fig:shapes_tosca}
\end{figure}

\subsection*{Runtime Performance}
We now discuss the observed runtime of the IP formulation of the unconstrained deformation-based edit distance.
Since IP is known to be \NP-complete, exponential running times are to be expected.
We implemented the reduction to IP in Python and used the Python binding of \gurobi{} through \pulp{}.

Up to trees of about 20 vertices, runtimes for single distance computations are in the range of seconds. Beyond this, they become infeasible rather quickly.
In the setting of computing the full distance matrix on the TOSCA ensemble, our implementation was able to handle distances for trees of up to 26 vertices.
After removing all trees with more than 26 vertices, 138 members remained.
It took about 5 days using 100 threads to compute all 9453 distances for the matrix.
In comparison, for all other distances and all values of $\epsilon$, computing the matrix took only several seconds, using the C++ implementation in TTK with ParaView.

\section{Conclusion and Future Work}
\label{sec:conclusion}

In this paper, we studied the unconstrained deformation-based edit distance for merge trees.
Compared to other edit distances for merge trees, this distance exhibits significantly improved robustness against both vertical and horizontal instabilities.
We described the conceptual background of these advantages and also demonstrated the stability experimentally: handling saddle swaps is a very effective way to increase the expressiveness of merge tree edit distances.
Although we established its \NP-completeness, we were able to derive an integer programming implementation to showcase the distance on reasonably sized (meaning small enough for our approach to terminate in reasonable time) instances.

Our results validate the quality of the deformation-based edit distance which was proposed by Wetzels et al.~\cite{wetzels2022path} previously, but only implemented in a constrained variant.
In future work, we want to study theoretical stability properties to complement the presented experimental insights.
Furthermore, our findings give rise to several strategies for designing improved \emph{tractable}
distances in the future.

\subsection*{Possible Strategies for Future Methods}

In our experiments, unconstrained edit distances expressed high potential in terms of comparison quality and stability.
They clearly outperform simpler constrained edit distances, independent of the preprocessing threshold used.
Nonetheless, $\epsilon$-preprocessing has proven to be effective on practical datasets as well~\cite{DBLP:journals/tvcg/SridharamurthyM20,DBLP:journals/tvcg/PontVDT22} and has significant advantages in terms of running time.
Thus, we believe that future methods for comparing merge trees through edit distances should focus on utilizing/allowing some form of actual saddle swaps (e.g.\ allow for a small or constant number of inner deletions), as well as combining such approaches with $\epsilon$-preprocessing with small threshold values.

In addition, further optimization of the proposed integer linear program should be considered.
There could be a lot of potential in improved reencoding or other strategies to reduce the size of the linear program.
An interesting question to study could be from which time bound on the more expensive encoding techniques should be applied.
Furthermore, optimizations similar to those by Hong et al.~\cite{DBLP:conf/cocoa/HongKY17} for classic tree edit distances could be considered.

During our experiments, we observed a significant impact of the IP solver used, depending on the applied reencoding strategies.
Thus, apart from further optimizing the presented strategies, properly studying the effects of different solvers should be considered.

Overall, the distance studied in this paper has potential of being applied directly on small instances and leading to very effective heuristic distances with more practical running times.

\subsection*{Supplementary Material}
This manuscript is accompanied by supplementary material:
We provide the source code of our Python implementation as well as the synthetic dataset for horizontal stability.
We will release the code in an open source repository~\cite{repository}.

\iflorian{ToDos:}
\iflorian{Figure placement}
\iflorian{update and check matrices}
\iflorian{report running times}
\iflorian{code supplement}
\iflorian{Maybe add subsection for distance matrix with time limit}

\acknowledgments{
The authors wish to thank Raghavendra Sridharamurthy for providing the pre-processed TOSCA dataset and Marvin Petersen for helpful discussions.
This work is funded by the Deutsche Forschungsgemeinschaft (DFG, German Research Foundation) – 442077441 - as well as by the European Research Council (ERC) under the European Union's Horizon 2020 research and innovation programme (EngageS: grant agreement No.~{820148}).
}

\bibliographystyle{abbrv-doi}

\bibliography{paper}
\end{document}